\newcommand{\COMMENTED}[1]{}
\newcommand{\R}{\mathcal{R}}
\newcommand{\C}{\mathcal{C}}
\newcommand{\F}{\mathcal{F}}
\newcommand{\SSS}{\mathcal{S}}
\newcommand{\E}{{\mbox {\bf E}}}
\newcommand{\Rc}{\ensuremath{\mathcal R}}
\newcommand{\rsum}{\ensuremath{\text{Sum}}}
\newcommand{\bx}{\ensuremath{\overline{x}}}
\newcommand{\br}{\ensuremath{\overline{r}}}
\newcommand{\sm}{\ensuremath{\setminus}}
\begin{document}

\title{Fault-Tolerant Facility Location:\\ a randomized dependent LP-rounding
  algorithm\thanks{This work was partially supported by: 
(i) the Future and Emerging Technologies Unit of EC (IST priority - 6th FP), under contract no.\ FP6-021235-2 (project ARRIVAL), 
(ii) MNISW grant number N N206 1723 33, 2007–-2010,
(iii)  NSF ITR Award CNS-0426683 and NSF Award CNS-0626636, and
(iv) NSERC grant 327620-09 and an Ontario Early Researcher Award.}}

\author{ Jaroslaw Byrka\inst{1}\thanks{
Work of this author was partially conducted at CWI Amsterdam, TU Eindhoven, and while visiting
the University of Maryland.} 
\and Aravind Srinivasan\inst{2}
\and Chaitanya Swamy\inst{3} 
}

\institute{
Institute of Mathematics, Ecole Polytechnique Federale de Lausanne,\\
CH-1015 Lausanne, SWITZERLAND. \email{jaroslaw.byrka@epfl.ch}
\and
Dept.\ of Computer Science and
Institute for Advanced Computer Studies, University of Maryland,
College Park, MD 20742, USA. \email{srin@cs.umd.edu}
\and
Dept.\ of Combinatorics \& Optimization,
Faculty of Mathematics,
University of Waterloo,
Waterloo, ON N2L 3G1, CANADA. \email{cswamy@math.uwaterloo.ca}
}

\maketitle
\begin{abstract}
We give a new randomized LP-rounding $1.725$-approximation algorithm for the metric Fault-Tolerant
Uncapacitated Facility Location problem. This improves on the previously best known
$2.076$-approximation algorithm of Swamy \& Shmoys. To the best of our knowledge, our work 
provides the first application of
a dependent-rounding technique in the domain
of facility location. The analysis of our algorithm benefits
from, and extends,
methods developed for Uncapacitated Facility Location; it also helps
uncover new properties of the dependent-rounding approach.

An important concept that we develop is a novel, hierarchical clustering scheme.
Typically, LP-rounding approximation algorithms for facility location problems are based on
partitioning facilities into disjoint clusters and opening at least one facility in each cluster.
We extend this approach and construct a laminar family of clusters, 
which then guides the rounding procedure. It allows to exploit properties of 
dependent rounding, and provides a quite tight analysis resulting in the 
improved approximation ratio.
\end{abstract}



\section{Introduction}
In Facility Location problems we are given a set of clients $\C$
that require a certain service. To provide such a service, 
we need to open a subset of a given set of facilities $\F$.
Opening each facility $i \in \F$ costs $f_i$ and serving
a client $j$ by facility $i$ costs $c_{ij}$; the standard
assumption is that the $c_{ij}$ are symmetric and constitute a
metric. (The non-metric case is much harder to approximate.)
In this paper, we follow Swamy \& Shmoys \cite{DBLP:journals/talg/SwamyS08}
and study
the Fault-Tolerant Facility Location (FTFL) problem, where each client
has a positive integer specified as its
\textit{coverage requirement} $r_j$. The task is to find a minimum-cost
solution which opens some facilities from $\F$ and connects each client
$j$ to $r_j$ different open facilities.

The FTFL problem was introduced by Jain \& Vazirani~\cite{DBLP:journals/algorithmica/JainV03}. Guha et al.~\cite{DBLP:journals/jal/GuhaMM03} gave the first constant factor approximation algorithm
with approximation ratio 2.408. This was later improved by Swamy \& Shmoys~\cite{DBLP:journals/talg/SwamyS08}
who gave a 2.076-approximation algorithm.
FTFL generalizes the standard Uncapacitated Facility Location (UFL)
problem wherein $r_j = 1$ for all $j$, for which Guha \& Khuller~\cite{DBLP:journals/jal/GuhaK99} proved an approximation lower bound of 
$\approx$ 1.463. The current-best approximation ratio for UFL is achieved 
by the 1.5-approximation algorithm of Byrka~\cite{DBLP:conf/approx/Byrka07}.

In this paper we give a new LP-rounding 
$1.7245$-approximation algorithm for the FTFL problem.
It is the first application of the dependent rounding technique of \cite{DBLP:conf/focs/Srinivasan01}
to a facility location problem. 

Our algorithm uses a novel clustering method, 
which allows clusters not to be disjoint, but rather
to form a laminar family of subsets of facilities.
The hierarchical structure of the obtained clustering
exploits properties of dependent rounding. 
By first rounding the ``facility-opening'' variables within smaller clusters,
we are able to ensure that at least a certain number of facilities is open
in each of the clusters. Intuitively, by allowing clusters to have different
sizes we may, in a more efficient manner, guarantee the opening of
sufficiently-many facilities
around clients with different coverage requirements $r_j$. 
In addition, one of our main technical contributions 
is Theorem~\ref{dep-vs-indep_thm}, which develops
a new property of the dependent-rounding technique that appears likely
to have further applications. Basically, suppose we apply
dependent rounding to a sequence of reals and consider an arbitrary
subset $S$ of the rounded variables (each of which lies in $\{0,1\}$) as
well as an arbitrary integer $k > 0$. 
Then, a natural fault-tolerance-related objective is that if $X$ denotes the 
number of variables rounded to $1$ in $S$, then the random variable
$Z = \min\{k, X\}$ be ``large''. (In other words, we want $X$ to be ``large'',
but $X$ being more than $k$ does not add any marginal utility.)
We prove that if $X_0$ denotes the corresponding
sum wherein the reals are rounded \emph{independently} and if
$Z_0 = \min\{k, X_0\}$, then $\E[Z] \geq \E[Z_0]$. Thus, for analysis purposes,
we may work with $Z_0$, which is much more tractable due to the independence;
at the same time, we derive all the benefits of dependent rounding (such as
a given number of facilities becoming available in a cluster, with probability
one). Given the growing number of applications of dependent-rounding
methodologies, we view this as a useful addition to the toolkit. 

\section{Dependent rounding}
\label{sec:dep-round}
Given a fractional vector $y = (y_1, y_2, \ldots, y_N) \in [0,1]^N$
we often seek to round it
to an integral vector $\hat{y} \in \{0,1\}^N$ that is in a
problem-specific sense very ``close to'' $y$. The
dependent-randomized-rounding technique of \cite{DBLP:conf/focs/Srinivasan01}
is one such approach known for preserving the sum of the entries
deterministically, along with concentration bounds for any linear
combination of the entries; we will generalize a known property of this
technique in order to apply it to the FTFL problem. The very useful
\emph{pipage rounding} technique of \cite{ageev-sviri} was developed
prior to \cite{DBLP:conf/focs/Srinivasan01}, and can be viewed as a
derandomization (deterministic analog) of \cite{DBLP:conf/focs/Srinivasan01}
via the method of conditional probabilities. Indeed, the results of
\cite{ageev-sviri} were applied in the work of
\cite{DBLP:journals/talg/SwamyS08}; the probabilistic intuition, as well as
our generalization of the analysis of \cite{DBLP:conf/focs/Srinivasan01},
help obtain our results.

Define $[t] = \{1, 2, \ldots, t\}$.
Given a fractional vector $y = (y_1, y_2, \ldots, y_N) \in [0,1]^N$,
the rounding technique of \cite{DBLP:conf/focs/Srinivasan01} (henceforth
just referred to as ``dependent rounding'') is a polynomial-time randomized
algorithm to produce a random vector $\hat{y} \in \{0,1\}^N$ with the
following three properties:
\begin{description}
\item[(P1): marginals.] $\forall i, ~\Pr[\hat{y}_i = 1] = y_i$;
\item[(P2): sum-preservation.]
With probability one, $\sum_{i=1}^N \hat{y}_i$ equals
either $\lfloor \sum_{i=1}^N y_i \rfloor$ or $\lceil \sum_{i=1}^N y_i \rceil$;
and
\item[(P3): negative correlation.] $\forall S \subseteq [N]$,
$\Pr[\bigwedge_{i \in S}(\hat{y}_i = 0)] \leq \prod_{i \in S}(1 - y_i)$, and
$\Pr[\bigwedge_{i \in S}(\hat{y}_i = 1)] \leq \prod_{i \in S} y_i$.
\end{description}
The dependent-rounding algorithm is described in Appendix~\ref{app:dep-round}.
In this paper, we also exploit the order in which the entries of the given 
fractional vector $y$ are rounded. We initially define a laminar family of subsets of indices
$\SSS \subseteq 2^{[N]}$. When applying the dependent rounding procedure, 
we first round within the smaller sets,
until at most one fractional entry in a set is left, then we proceed with bigger sets
possibly containing the already rounded entries. It can easily be shown
that it assures the following version of property (P2) for all subsets $S$
from the laminar family $\SSS$:
\begin{description}
\item[(P2'): sum-preservation.] With probability one, 
$\sum_{i \in S} \hat{y}_i = \sum_{i \in S} y_i$\\ and 
$|\{i \in S: \hat{y}_i = 1 \}| = \lfloor \sum_{i \in S} y_i \rfloor$.
\end{description}

Now, let $S \subseteq [N]$ be any subset, not necessarily 
from $\SSS$. In order to present our results, we
need two functions, $\mbox{Sum}_S$ and $g_{\lambda, S}$.
For any vector $x \in [0,1]^n$, let $\mbox{Sum}_S(x) =
\sum_{i \in S} x_i$ be the sum of the
elements of $x$ indexed by elements of $S$; in particular, if
$x$ is a (possibly random) vector with all entries either $0$ or $1$,
then $\mbox{Sum}_S(x)$ counts the number of entries in $S$ that are $1$.
Next, given $s = |S|$ and a real vector
$\lambda = (\lambda_0, \lambda_1, \lambda_2, \ldots, \lambda_s)$, we
define, for any $x \in \{0,1\}^n$,
\[ g_{\lambda,S}(x) =
\sum_{i=0}^s \lambda_i \cdot \mathcal{I}(\mbox{Sum}_S(x) = i), \]
where $\mathcal{I}(\cdot)$ denotes the indicator function.
Thus, $g_{\lambda,S}(x) = \lambda_i$ if $\mbox{Sum}_S(x) = i$.

Let $\mathcal{R}(y)$ be a random vector in $\{0,1\}^N$ obtained by
\emph{independently} rounding
each $y_i$ to $1$ with probability $y_i$, and to $0$ with the
complementary probability of $1 - y_i$.
Suppose, as above, that $\hat{y}$ is a random vector in $\{0,1\}^N$ obtained by
applying the dependent rounding technique to $y$.
We start with a general theorem
and then specialize it to Theorem~\ref{dep-vs-indep_thm} that
will be very useful for us:

\begin{theorem}
\label{dep-vs-indep:general}
Suppose we conduct dependent rounding on $y = (y_1, y_2, \ldots, y_N)$.
Let $S \subseteq [N]$ be any subset with cardinality $s \geq 2$, and
let $\lambda = (\lambda_0, \lambda_1, \lambda_2, \ldots, \lambda_s)$ be
any vector, such that for all $r$ with $0 \leq r \leq s-2$ we have
$\lambda_{r} - 2 \lambda_{r + 1} + \lambda_{r + 2} \leq 0$. Then,
$\E[g_{\lambda,S}(\hat{y})] \geq \E[g_{\lambda,S}(\mathcal{R}(y))]$.
\end{theorem}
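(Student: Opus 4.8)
The plan is to reduce the dependent-rounding process to a sequence of elementary rounding steps and argue that the quantity $\E[g_{\lambda,S}(\cdot)]$ does not decrease at each such step. Recall that the dependent-rounding algorithm of \cite{DBLP:conf/focs/Srinivasan01} proceeds by repeatedly picking two fractional coordinates $y_a,y_b$ and performing an atomic step that keeps $y_a+y_b$ fixed while pushing one of them to a boundary ($0$ or $1$); after finitely many such steps all coordinates are integral, and the randomness is entirely carried by these atomic steps. So it suffices to show: \emph{conditioned on the current fractional vector $y$ and on the choice of the pair $(a,b)$ to be updated, the expectation of $g_{\lambda,S}(y')$ over the random outcome of the atomic step is at least $g_{\lambda,S}(y)$, where for a fractional $y$ we interpret $g_{\lambda,S}(y)$ as $\E[g_{\lambda,S}(\R(y))]$ (the expectation when the still-fractional coordinates are rounded independently).} Chaining this inequality over all atomic steps, starting from the fully fractional $y$ (where $g_{\lambda,S}(y)=\E[g_{\lambda,S}(\R(y))]$) and ending at the integral $\hat y$, yields $\E[g_{\lambda,S}(\hat y)]\ge\E[g_{\lambda,S}(\R(y))]$.

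First I would set up the one-atomic-step computation. Fix all coordinates other than $a,b$; since $g_{\lambda,S}$ depends on its argument only through $\mbox{Sum}_S$, and $\mbox{Sum}_S$ is linear, everything reduces to the contribution of the block $\{a,b\}\cap S$. There are three cases depending on $|\{a,b\}\cap S|\in\{0,1,2\}$. If the block is disjoint from $S$, the atomic step changes nothing relevant and the inequality is an equality. If exactly one of $a,b$ lies in $S$, the atomic step merely redistributes probability mass of that one coordinate being $1$ — but an atomic step of dependent rounding preserves each coordinate's marginal in expectation (property (P1) holds step by step), so again the distribution of $\mbox{Sum}_S$ is unchanged in expectation and we get equality. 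The only interesting case is $\{a,b\}\subseteq S$: here $y_a+y_b=:t$ is preserved by the step, and I would compare, as a function of $(y_a,y_b)$ on the segment $y_a+y_b=t$, the quantity $\phi(y_a,y_b):=\E[g_{\lambda,S}(\R(y))]$. Writing $m$ for $\mbox{Sum}_{S\setminus\{a,b\}}(\R(y))$ (a random variable independent of coordinates $a,b$), one has $\mbox{Sum}_S=m+\hat y_a+\hat y_b$, so conditioning on $m$ reduces the comparison to the three-point distribution of $\hat y_a+\hat y_b\in\{0,1,2\}$ with probabilities $(1-y_a)(1-y_b),\ y_a(1-y_b)+(1-y_a)y_b,\ y_ay_b$. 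A short computation shows $\phi$ is, along the segment, a quadratic in $y_a$ whose second-derivative sign is governed precisely by the second differences $\lambda_{k}-2\lambda_{k+1}+\lambda_{k+2}$ evaluated at the relevant $k=m$; the hypothesis that all these second differences are $\le 0$ makes $\phi$ concave along the segment. Since an atomic step of dependent rounding moves $(y_a,y_b)$ to one of two points on the segment whose convex combination is the current point $(y_a,y_b)$ (i.e.\ the step is a martingale on the segment), concavity of $\phi$ gives $\E[\phi(\text{new point})]\le\phi(\text{current point})$ — wait, that is the wrong direction, so in fact I need $\phi$ convex. Let me restate: the correct reading is that $g_{\lambda,S}(\R(y))$ in expectation \emph{increases} toward the integral endpoints exactly when $\phi$ is convex along the segment, and the second-difference condition $\lambda_r-2\lambda_{r+1}+\lambda_{r+2}\le 0$, after carrying the sign through the explicit quadratic (the leading coefficient turns out to be a nonnegative combination of the negated second differences), yields the needed convexity; concavity/convexity bookkeeping is the one place to be careful with signs.

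The main obstacle I anticipate is precisely this sign/convexity bookkeeping in the two-coordinate atomic step, together with correctly handling the conditioning on the ``rest of $S$'': one must check that the relevant second difference appearing after the computation is always of the form $\lambda_r-2\lambda_{r+1}+\lambda_{r+2}$ for some valid index $r$ in the range $0\le r\le s-2$ — which is where the hypotheses $s\ge 2$ and the full range of the condition get used — and that no boundary index (like $\mbox{Sum}_S=0$ or $=s$) spoils the argument. A secondary point to nail down is the claim that the dependent-rounding process can indeed be written as a sequence of such pair-atomic martingale steps with coordinate-marginals preserved step by step; this is the standard structure of the algorithm of \cite{DBLP:conf/focs/Srinivasan01} (and is exactly what is described in Appendix~\ref{app:dep-round}), so I would cite it rather than reprove it. Once these pieces are in place, the telescoping argument over the atomic steps is immediate and gives the theorem.
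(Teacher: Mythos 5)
Your proposal is correct and follows essentially the same route as the paper: you track the potential $\E[g_{\lambda,S}(\mathcal{R}(\cdot))]$ of the current fractional vector, show it is non-decreasing over each atomic step (with equality unless both updated coordinates lie in $S$), and invoke the second-difference hypothesis in the two-in-$S$ case. Your convexity-along-the-segment/Jensen (martingale) argument, with leading coefficient $-(\lambda_m-2\lambda_{m+1}+\lambda_{m+2})\ge 0$ after conditioning on the rest of $S$, is just a repackaging of the paper's computation that the bilinear coefficient $u_3=\sum_r(\lambda_r-2\lambda_{r+1}+\lambda_{r+2})\beta_r\le 0$ combined with $\E[V_iV_j]\le v_iv_j$, and your sign resolution (convexity, not concavity) is the correct one.
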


\begin{theorem}
\label{dep-vs-indep_thm}
  For any $y \in [0,1]^N$, $S \subseteq [N]$, and $k = 1,2, \ldots$, we have
\[
\E[\min\{k, \mbox{Sum}_S(\hat{y})\}] \geq
\E[\min\{k, \mbox{Sum}_S(\mathcal{R}(y))\}].
\]
\end{theorem}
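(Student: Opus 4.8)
The plan is to obtain Theorem~\ref{dep-vs-indep_thm} as an immediate consequence of Theorem~\ref{dep-vs-indep:general} by instantiating the weight vector $\lambda$ with the truncation sequence. First I would dispose of the degenerate cases $s = |S| \in \{0,1\}$, which are not covered by Theorem~\ref{dep-vs-indep:general}: if $s = 0$ then $\mbox{Sum}_S \equiv 0$ and both sides equal $0$; if $s = 1$, say $S = \{i\}$, then $\mbox{Sum}_S(\hat y) = \hat y_i$ and $\mbox{Sum}_S(\mathcal{R}(y)) = \mathcal{R}(y)_i$ are each Bernoulli with parameter $y_i$ by property (P1), and since $k \geq 1$ we have $\min\{k, b\} = b$ for $b \in \{0,1\}$, so the two expectations again coincide. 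Hence it remains to treat $s \geq 2$, which is exactly the hypothesis of Theorem~\ref{dep-vs-indep:general}.

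For $s \geq 2$, set $\lambda_i = \min\{k, i\}$ for $i = 0, 1, \ldots, s$. For any $x \in \{0,1\}^N$ the value $\mbox{Sum}_S(x)$ lies in $\{0, \ldots, s\}$ and equals a single $i$, so $g_{\lambda, S}(x) = \sum_{i=0}^s \min\{k,i\}\cdot\mathcal{I}(\mbox{Sum}_S(x) = i) = \min\{k, \mbox{Sum}_S(x)\}$; thus $g_{\lambda,S}(\hat y) = \min\{k, \mbox{Sum}_S(\hat y)\}$ and $g_{\lambda,S}(\mathcal{R}(y)) = \min\{k, \mbox{Sum}_S(\mathcal{R}(y))\}$ pointwise. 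It then suffices to verify the concavity hypothesis $\lambda_r - 2\lambda_{r+1} + \lambda_{r+2} \leq 0$ for $0 \leq r \leq s-2$. This holds because the forward differences $\lambda_{i+1} - \lambda_i = \min\{k,i+1\} - \min\{k,i\}$ equal $1$ for $i < k$ and $0$ for $i \geq k$, hence form a nonincreasing sequence, which is equivalent to the second differences being nonpositive. Applying Theorem~\ref{dep-vs-indep:general} to this $\lambda$ gives $\E[\min\{k, \mbox{Sum}_S(\hat y)\}] \geq \E[\min\{k, \mbox{Sum}_S(\mathcal{R}(y))\}]$, completing the argument.

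I do not expect any real obstacle in this corollary: the only things to check are the discrete concavity of the truncation map and the bookkeeping with the indicator representation of $g_{\lambda,S}$, all of which are routine. The substantive difficulty is entirely contained in Theorem~\ref{dep-vs-indep:general}, which I am assuming. Were one to try to prove Theorem~\ref{dep-vs-indep_thm} directly without that general statement, the hard part would be to track how $\E[\min\{k, \mbox{Sum}_S(\cdot)\}]$ changes along the elementary steps of the dependent-rounding procedure — each step perturbs some pair $(y_a, y_b)$ to $(y_a+\delta, y_b-\delta)$ while freezing the remaining coordinates — and to show, by conditioning on those frozen coordinates and using the concavity of $b \mapsto \min\{k,b\}$, that no such step decreases this expectation, then to chain the steps together by induction; packaging precisely this is the role of Theorem~\ref{dep-vs-indep:general}, so here the deduction is one line.
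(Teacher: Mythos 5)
Your proposal is correct and is essentially the paper's own proof: the paper likewise sets $\lambda = (0,1,\ldots,k,k,\ldots,k)$, i.e.\ $\lambda_i=\min\{k,i\}$, checks that the second differences are nonpositive (zero except at $r=k-1$, where the value is $-1$), and invokes Theorem~\ref{dep-vs-indep:general}, disposing of the degenerate cases $s\leq 1$ (and $k\geq s$) via property (P1) just as you do. Your observation that the $k\geq s$ case need not be separated out, since $\lambda$ is then linear and the concavity hypothesis still holds, is a minor simplification but not a different argument.
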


Using the notation $\exp(t) = e^t$, our next key result is:

\begin{theorem}
\label{dr_thm}
  For any $y \in [0,1]^N$, $S \subseteq [N]$, and $k = 1,2, \ldots$, we have
\[
\E[\min\{k, \mbox{Sum}_S(\mathcal{R}(y))\}] \geq k \cdot (1 - \exp(-\mbox{Sum}_S(y) / k)).
\]
\end{theorem}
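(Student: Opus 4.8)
The plan is to lower-bound $\E[\min\{k,\mbox{Sum}_S(\mathcal{R}(y))\}]$ by coupling the independent rounding with an auxiliary balls-into-bins experiment. Write $X=\mbox{Sum}_S(\mathcal{R}(y))=\sum_{i\in S}B_i$, where the $B_i$ are independent with $\Pr[B_i=1]=y_i$, and set $\mu=\mbox{Sum}_S(y)=\sum_{i\in S}y_i$. Independently of the $B_i$, I would throw each index $i\in S$ into one of $k$ bins chosen uniformly at random. Call $i$ \emph{active} if $B_i=1$, call a bin \emph{nonempty} if it receives at least one active index, and let $N$ be the number of nonempty bins. The point of this device is that capping $X$ at $k$ is exactly what ``counting occupied bins among $k$ bins'' does, and the latter is a sum of $k$ indicators whose expectations factor over $S$ by independence.

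\emph{Step 1 (pointwise domination).} For every outcome we have $\min\{k,X\}\ge N$: trivially $N\le k$ since there are $k$ bins, and $N\le X$ since the nonempty bins are disjoint and each contains at least one of the $X$ active indices. Taking expectations over both the $B_i$ and the bin assignment yields $\E[\min\{k,X\}]\ge\E[N]$.

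\emph{Step 2 (evaluate $\E[N]$).} By linearity, $\E[N]=\sum_{\ell=1}^{k}\Pr[\text{bin }\ell\text{ is nonempty}]$. For a fixed $\ell$, the events ``$i$ is active and assigned to bin $\ell$'', over $i\in S$, are mutually independent (each is a function of the independent pair $(B_i,A_i)$, where $A_i$ is the bin of $i$) and each has probability $y_i/k$; hence $\Pr[\text{bin }\ell\text{ empty}]=\prod_{i\in S}(1-y_i/k)$, so $\E[N]=k\bigl(1-\prod_{i\in S}(1-y_i/k)\bigr)$. \emph{Step 3 (weaken).} Using $1-t\le e^{-t}$ for $t\ge 0$, $\prod_{i\in S}(1-y_i/k)\le\prod_{i\in S}e^{-y_i/k}=\exp(-\mu/k)$, and combining the three steps gives $\E[\min\{k,X\}]\ge k(1-\exp(-\mu/k))$, as claimed. (As a check, for $k=1$ this recovers $\Pr[X\ge 1]=1-\prod_{i\in S}(1-y_i)\ge 1-e^{-\mu}$.)

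I do not expect a real obstacle here: the only idea needed is the bin experiment, after which everything is routine. The one thing to be careful about is that the bin assignment must be drawn independently of the rounding, so that the ``active and in bin $\ell$'' events stay independent across $i\in S$; and that we are rounding \emph{independently} (i.e.\ using $\mathcal{R}(y)$, not $\hat y$) is precisely what makes the product $\prod_{i\in S}(1-y_i/k)$ come out in Step~2 --- for $\hat y$ one would instead chain this with Theorem~\ref{dep-vs-indep_thm}.
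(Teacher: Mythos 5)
Your proof is correct, and it reaches the theorem by a genuinely different route than the paper. Both arguments in fact establish the same sharper intermediate bound $\E[\min\{k,\mathrm{Sum}_S(\mathcal{R}(y))\}]\ \geq\ k\bigl(1-\prod_{i\in S}(1-y_i/k)\bigr)$ and then finish identically via $1-t\leq e^{-t}$; the difference is how that product bound is obtained. The paper proves it by induction on $|S|$: it conditions on the rounding of one coordinate, uses the elementary inequality $\min\{k-1,X\}\geq\frac{k-1}{k}\min\{k,X\}$ to merge the two conditional terms into $y_1+\bigl(1-\tfrac{y_1}{k}\bigr)\E[\min\{k,\mathrm{Sum}_{S\setminus\{1\}}(\mathcal{R}(y))\}]$, and then invokes the induction hypothesis. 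Your balls-into-bins coupling replaces that induction with a one-shot argument: the pointwise domination $\min\{k,X\}\geq$ (number of occupied bins) plus linearity of expectation and independence of the pairs $(B_i,A_i)$ gives the product bound directly. Your route is arguably more transparent about \emph{why} the bound has the form $k(1-\prod_i(1-y_i/k))$ (it is literally an occupancy probability), at the cost of introducing auxiliary randomness; the paper's induction is self-contained and uses nothing beyond conditioning, and its recursive structure mirrors the step-by-step flavor of the surrounding dependent-rounding analysis. Two small cosmetic points: you reuse the letter $N$ both for the ambient dimension in the theorem statement and for the number of nonempty bins, so rename one of them; and it is worth stating explicitly (as you do in passing) that the bin assignment is independent of $\mathcal{R}(y)$, since that independence is exactly what makes the per-bin emptiness probability factor as $\prod_{i\in S}(1-y_i/k)$.
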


The above two theorems yield a key corollary that we will use:
\begin{corollary}
\label{cor:dr_cor}
\[
\E[\min\{k, \mbox{Sum}_S(\hat{y})\}] \geq
k \cdot (1 - \exp(-\mbox{Sum}_S(y) / k)).
\]
\end{corollary}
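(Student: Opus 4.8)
The plan is to obtain the corollary as the immediate composition of the two theorems preceding it, so essentially nothing new must be proved. Concretely, fix $y\in[0,1]^N$, a subset $S\subseteq[N]$, and an integer $k\ge 1$. Theorem~\ref{dep-vs-indep_thm}, applied to exactly this data, gives $\E[\min\{k,\mbox{Sum}_S(\hat y)\}]\ge \E[\min\{k,\mbox{Sum}_S(\mathcal{R}(y))\}]$, and Theorem~\ref{dr_thm}, again applied to the same data, gives $\E[\min\{k,\mbox{Sum}_S(\mathcal{R}(y))\}]\ge k\bigl(1-\exp(-\mbox{Sum}_S(y)/k)\bigr)$. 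Transitivity of $\ge$ finishes the argument. The corollary is singled out only because this is the form used later: it says that after dependent rounding the expected number of open facilities in a cluster $S$, counted with a cap of $k$, is at least $k\bigl(1-\exp(-\mbox{Sum}_S(y)/k)\bigr)$, a quantity that depends on the cluster only through the fractional opening mass $\mbox{Sum}_S(y)$ it receives.

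It is worth recording why both ingredients are genuinely needed, i.e.\ why the corollary is not itself the hard part. Theorem~\ref{dr_thm} is a pure independent-rounding estimate: $\mbox{Sum}_S(\mathcal{R}(y))$ is a sum of independent Bernoulli variables, and the inequality is a standard-flavoured concentration bound (provable, say, via a Poissonization together with the concavity of $t\mapsto\min\{k,t\}$). But this says nothing about $\hat y$, whose coordinates are produced by dependent rounding and are \emph{not} independent; closing precisely that gap is the role of Theorem~\ref{dep-vs-indep_thm}, which certifies that replacing independent rounding by dependent rounding can only help the capped-count objective. So the corollary is exactly the point at which these two complementary facts are glued together.

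Accordingly, I expect the real obstacle to lie upstream, in Theorem~\ref{dep-vs-indep:general}, from which Theorem~\ref{dep-vs-indep_thm} is deduced. There one must analyse how a single elementary step of the dependent-rounding procedure — simultaneously perturbing two still-fractional coordinates in opposite directions until one of them reaches $\{0,1\}$ — changes $\E[g_{\lambda,S}(\cdot)]$, and show that the discrete-concavity hypothesis $\lambda_r-2\lambda_{r+1}+\lambda_{r+2}\le 0$ makes each such step non-decreasing in expectation, so that the fully rounded dependent vector dominates the fully rounded independent one. To pass from that general statement to Theorem~\ref{dep-vs-indep_thm} one only checks that $\lambda_i=\min\{k,i\}$ satisfies the hypothesis: its second differences $\lambda_r-2\lambda_{r+1}+\lambda_{r+2}$ vanish for every $r$ except $r=k-1$, where the value is $-1<0$. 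With Theorems~\ref{dep-vs-indep_thm} and \ref{dr_thm} in hand, Corollary~\ref{cor:dr_cor} itself is a one-line deduction.
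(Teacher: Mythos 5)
Your deduction is correct and is exactly the paper's own argument: Corollary~\ref{cor:dr_cor} is obtained there by chaining Theorem~\ref{dep-vs-indep_thm} (dependent dominates independent for the capped sum) with Theorem~\ref{dr_thm} (the independent-rounding lower bound), precisely as you do. Your remarks on where the real work lies (Theorem~\ref{dep-vs-indep:general} and the second-difference check for $\lambda_i=\min\{k,i\}$) also match the paper's development, so nothing further is needed.
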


%


Proofs of the theorems from this section are provided in Appendix B.

\section{Algorithm}
\subsection{LP-relaxation}
The FTFL problem is defined by the following
Integer Program (IP).
\begin{eqnarray}
 \label{eq1}\mbox{minimize} &\sum_{i \in \F} f_i y_i + \sum_{j \in \C} \sum_{i \in \F} c_{ij}x_{ij} \\
 \mbox{subject to:}&\sum_i  x_{ij} \geq r_j  & \forall j \in \C\\
	& x_{ij} \leq y_i & \forall j \in \C \; \forall i \in \F\\
	& y_i \leq 1  & \forall i \in \F\\
\label{integrality}	& x_{ij},y_i \in Z_{\geq 0} & \forall j \in \C \; \forall i \in \F,
\end{eqnarray}
where $\C$ is the set of clients, $\F$ is the set of possible locations of facilities,
$f_i$ is a cost of opening a facility at location $i$, $c_{ij}$ is a cost of serving
client $j$ from a facility at location $i$, and $r_j$ is the amount of facilities
client $j$ needs to be connected to.

If we relax constraint~(\ref{integrality}) to $x_{ij},y_i \geq 0$ we obtain the standard LP-relaxation
of the problem. Let $(x^*,y^*)$ be an optimal solution to this LP relaxation.
We will give an algorithm that rounds this solution to an integral solution $(\tilde{x},\tilde{y})$ with cost
at most $\gamma \approx 1.7245$ times the cost of $(x^*,y^*)$.

\subsection{Scaling} \label{sec_scaling}

We may assume, without loss of generality, that for any client $j \in \C$
there exists at most one facility $i \in \F$ such that $0 < x_{ij} < y_i$. Moreover, this
facility may be assumed to have the highest distance to client $j$ among the
facilities that fractionally serve $j$ in $(x^*,y^*)$.

We first set $\tilde{x}_{ij} = \tilde{y}_i = 0$ for all $i \in \F$, $j \in \C$.
Then we scale up the fractional solution by the constant $\gamma \approx 1.7245$
to obtain a fractional solution $(\hat{x},\hat{y})$. To be precise:
we set $\hat{x}_{ij} = \min\{1,\gamma \cdot x^*_{ij}\}$,
$\hat{y}_i = \min\{1,\gamma \cdot y^*_i\}$.
We open each facility $i$ with $\hat{y}_i = 1$ and connect each client-facility pair with $\hat{x}_{ij} = 1$.
To be more precise, we modify $\hat{y}$, $\tilde{y}$, $\hat{x}$, $\tilde{x}$ and service requirements $r$ as follows.
For each facility $i$ with $\hat{y}_i = 1$, set $\hat{y}_i = 0$ and $\tilde{y}_i = 1$.
Then, for every pair $(i,j)$ such that $\hat{x}_{ij} = 1$,
set $\hat{x}_{ij} = 0$, $\tilde{x}_{ij} = 1$  and decrease $r_j$ by one. When this process is finished
we call the resulting $r$, $\hat{y}$ and $\hat{x}$ by $\overline{r}$, $\overline{y}$ and $\overline{x}$.
Note that the connections that we made in this phase may be paid for
by a difference in the connection cost between $\hat{x}$ and $\overline{x}$.
We will show that the remaining connection cost of the solution of the algorithm
is expected to be at most the cost of $\overline{x}$.

For the feasibility of the final solution, it is essential that if we connected client $j$
to facility $i$ in this initial phase,
we will not connect it again to $i$ in the rest of the algorithm.
There will be two ways of connecting clients in the process of
rounding $\overline{x}$. The first one connects client $j$
to a subset of facilities serving $j$ in $\overline{x}$.
Recall that if $j$ was connected to facility $i$ in the initial phase,
then $\overline{x}_{ij} = 0$, and no additional $i$-$j$ connection will be created.

The connections of the second type will be created in a process of \emph{clustering}.
The clustering that we will use is a generalization of the clustering used
by Chudak \& Shmoys for the UFL
problem \cite{DBLP:journals/siamcomp/ChudakS03}.
As a result of this clustering process,
client $j$ will be allowed to connect itself via a different client $j'$
to a facility open around $j'$. $j'$ will be called a \emph{cluster center} for
a subset of facilities, and it will make sure that at least some
guaranteed number of these facilities will get opened.

To be certain that client $j$ does not get again connected to facility $i$
with a path via client $j'$, facility $i$ will never
be a member of the set of facilities clustered by client $j'$.
We call a facility $i$ \emph{special} for client $j$ iff
$\tilde{y}_i=1$ and $0 < \overline{x}_{ij} < 1$. Note that, by our earlier
assumption, there is at most one special facility for each client $j$,
and that a special facility must be at maximal distance among facilities
serving $j$ in $\overline{x}$. When rounding the fractional solution in
Section~\ref{sec_rounding}, we take care that special facilities are not
members of the formed clusters.

\subsection{Close and distant facilities}

Before we describe how do we cluster facilities,
we specify the facilities that are interesting for a particular
client in the clustering process. The following can be fought of
as a version of a \emph{filtering} technique of Lin and Vitter~\cite{DBLP:conf/stoc/LinV92},
first applied to facility location by Shmoys et al.~\cite{DBLP:conf/stoc/ShmoysTA97}.
The analysis that we use here is a version of the argument of Byrka~\cite{DBLP:conf/approx/Byrka07}.

As a result of the scaling that was described in the previous section,
the connection variables $\overline{x}$ amount for a total connectivity
that exceeds the requirement $\overline{r}$.
More precisely, we have $\sum_{i \in \F}\overline{x}_{ij} \geq \gamma \cdot \overline{r}_j$
for every client $j \in \C$. We will consider for each client $j$ a subset of facilities
that are just enough to provide it a fractional connection of $\overline{r}_j$.
Such a subset is called a set of \emph{close facilities} of client $j$ and is defined as follows.

For every client $j$ consider the following construction.
Let $i_1, i_2, \ldots, i_{|\F|}$ be the ordering of facilities in $\F$ in
a nondecreasing order of distances $c_{ij}$ to client $j$.
Let $i_k$ be the facility in this ordering, such that $\sum_{l=1}^{k-1} \overline{x}_{i_{l}j} < \overline{r}_j$
and $\sum_{l=1}^{k} \overline{x}_{i_{l}j} \geq \overline{r}_j$.
Define
\[
\overline{x}_{i_{l}j}^{(c)} = \left\{ \begin{array}{ll}
 \overline{x}_{i_{l}j} & \mbox{ for } l<k,\\
 \overline{r}_j - \sum_{l=1}^{k-1} \overline{x}_{i_{l}j} & \mbox{ for } l=k,\\
 0 & \mbox{ for } l>k
\end{array}\right.
\]
Define $\overline{x}_{ij}^{(d)} = \overline{x}_{ij} - \overline{x}_{ij}^{(c)}$ for all $i \in \F, j \in \C$.

We will call the set of facilities $i \in \F$ such that $\overline{x}_{ij}^{(c)} > 0$ the set of \emph{close
facilities} of client $j$ and we denote it by $C_j$. By analogy, we will call the set of facilities $i \in \F$
such that $\overline{x}_{ij}^{(d)} > 0$ the set of \emph{distant facilities} of client $j$ and denote it $D_j$.
Observe that for a client $j$ the intersection of $C_j$ and $D_j$ is either empty, or contains exactly one facility.
In the latter case, we will say that this facility is both distant and close. Note that, unlike in the UFL problem,
we may not simply split this facility to the close and the distant part, because
it is essential that we make at most one
connection to this facility in the final integral solution.
Let $d^{(max)}_j = c_{i_k j}$ be the distance from client $j$ to the farthest of its close facilities.

\subsection{Clustering}
\label{sec_clustering}
We will now construct a family of subsets of facilities $\SSS \in 2^\F$.
These subsets $S \in \SSS$ will be called clusters and
they will guide the rounding procedure described next.
There will be a client related to each cluster, and each single
client $j$ will be related to at most one cluster, which we call $S_j$.

Not all the clients participate in the clustering process. 
Clients $j$ with $\overline{r}_j = 1$ and
a special facility $i' \in C_j$ (recall that a special
facility is a facility that is fully open in $\hat{y}$ but only partially used
by $j$ in $\overline{x}$) will be called special and will not take part in the clustering process.
Let $\C'$ denote the set of all other, non-special clients.
Observe that, as a result of scaling, clients $j$ with $\overline{r}_j \geq 2$ do not have any special facilities among
their close facilities (since $\sum_i\bx_{ij} \geq \gamma \br_j > \br_j + 1$). As a consequence, there are no special facilities
among the close facilities of clients from $\C'$, the only clients actively
involved in the clustering procedure.

For each client $j \in \C'$ we will keep two families $A_j$ and $B_j$ of
disjoint subsets of facilities.
Initially $A_j = \{ \{i\} : i \in C_j \}$, i.e., $A_j$
is initialized to contain a singleton set for each close facility
of client $j$; $B_j$ is initially empty.
$A_j$ will be used to store these initial singleton sets, 
but also clusters containing only close facilities of $j$;
$B_j$ will be used to store only clusters that contain at least one
close facility of $j$. When adding a cluster to either $A_j$ or $B_j$
we will remove all the subsets it intersects from both $A_j$ and $B_j$,
therefore subsets in $A_j \cup B_j$ will always be pairwise disjoint.

The family of clusters that we will construct will be a laminar family of subsets
of facilities, i.e., any two clusters are either disjoint or one entirely contains the other.
One may imagine facilities being leaves and clusters being internal nodes
of a forest that eventually becomes a tree, when all the clusters are added.

We will use $\overline{y}(S)$ as a shorthand for $\sum_{i \in S}\overline{y}_i$.
Let us define $\underline{y}(S) = \lfloor \overline{y}(S) \rfloor $.  
As a consequence of using the family of clusters to guide the rounding process, 
by Property (P2') of the dependent rounding procedure when applied to a cluster, 
th quantity $\underline{y}(S)$ lower bounds the number of facilities that will certainly be opened in cluster $S$.
Additionally, let us define the residual requirement of client $j$ to be $rr_j = \overline{r}_j - \sum_{S \in (A_j \cup B_j)} \underline{y}(S)$, that is $\overline{r}_j$ minus a lower bound on the number of facilities that will be opened
in clusters from $A_j$ and $B_j$.

We use the following procedure to compute clusters.
While there exists a client $j \in \C'$, such that $rr_j > 0$,
take such $j$ with minimal $d_j^{(max)}$ and do the following:
\begin{enumerate}
 \item Take $X_j$ to be an inclusion-wise minimal subset of $A_j$,
       such that $\sum_{S \in X_j} (\overline{y}(S) - \underline{y}(S)) \geq rr_j$.
       Form the new cluster $S_j= \bigcup_{S \in X_j} S$.
 \item Make $S_j$ a new cluster by setting $\SSS \leftarrow \SSS \cup \{S_j\}$.
 \item Update $A_{j} \leftarrow (A_{j} \setminus X_j) \cup \{S_j\}$.
 \item For each client $j'$ with $rr_{j'} > 0$ do
\begin{itemize}
 \item If $X_j \subseteq A_{j'}$,
 	then set $A_{j'} \leftarrow (A_{j'} \setminus X_j) \cup \{S_j\}$.
 \item If $X_j \cap A_{j'} \neq \emptyset$ and $X_j \setminus A_{j'} \neq \emptyset$,\\
	then set $A_{j'} \leftarrow A_{j'} \setminus X_j$ and $B_{j'} \leftarrow \{S \in B_{j'}: S \cap S_j = \emptyset\} \cup \{S_j\}$.
\end{itemize}
\end{enumerate}
Eventually, add a cluster $S_r = \F$ containing all the facilities to the family $\SSS$.

We call a client $j'$ active in a particular iteration, if before this iteration
its residual requirement $rr_j = \overline{r}_j - \sum_{S \in (A_j \cup B_j)} \underline{y}(S) $ was positive.
During the above procedure, all active clients $j$ have in their sets $A_j$ and $B_j$ only maximal subsets of facilities, that means they are not subsets of any other clusters (i.e., they are roots of their trees in the current forest). Therefore, when a new cluster $S_j$ is created, it contains all the other clusters with which it has nonempty intersections (i.e., the new cluster $S_j$ becomes a root of a new tree).

We shall now argue that there is enough fractional opening in clusters in $A_j$
to cover the residual requirement $rr_j$ when cluster $S_j$ is to be formed. Consider a fixed client $j \in \C'$.
Recall that at the start of the clustering we have $A_j = \{ \{i\}: i \in C_j\}$, and therefore 
$\sum_{S \in A_j}( \overline{y}(S)- \underline{y}(S)) = \sum_{i \in C_j} \overline{y}_{i} \geq \overline{r}_j = rr_j$.
It remains to show, that $\sum_{S \in A_j} (\overline{y}(S) - \underline{y}(S)) - rr_j$ does not decrease over time
until client $j$ is considered.
When a client $j'$ with $d_{j'}^{(max)} \leq d_j^{(max)}$ is considered and cluster $S_{j'}$ is created, the following cases are possible:
\begin{enumerate}
\item $S_{j'} \cap (\bigcup_{S \in A_j}S) = \emptyset$: then $A_j$ and $rr_j$ do not change;
\item $S_{j'} \subseteq (\bigcup_{S \in A_j}S)$: then $A_j$ changes its structure, but $\sum_{S \in A_j} \overline{y}(S)$ and $\sum_{S \in B_j} \underline{y}(S)$ do not change; hence $\sum_{S \in A_j}( \overline{y}(S) - \underline{y}(S)) - rr_j$ also does not change;
\item $S_{j'} \cap (\bigcup_{S \in A_j}S) \neq \emptyset$ and $S_{j'} \setminus (\bigcup_{S \in A_j}S) \neq \emptyset$: then, by inclusion-wise minimality of set $X_{j'}$, we have \mbox{$ \underline{y}(S_{j'}) - \sum_{S \in B_j, S \subseteq S_{j'}} \underline{y}(S) - \sum_{S \in A_j, S \subseteq S_{j'} } \overline{y}(S) \geq 0$;} hence, $\sum_{S \in A_j} (\overline{y}(S) - \underline{y}(S)) - rr_j$ cannot decrease.
\end{enumerate}


Let $A'_j = A_j \cup \SSS$ be the set of clusters in $A_j$.
Recall that all facilities in clusters in $A'_j$ are close facilities of $j$. 
Note also that each cluster $S_{j'} \in B_j$ was created from close facilities of a client $j'$
with $d_{j'}^{(max)} \leq d_{j}^{(max)}$. We also have for each $S_{j'} \in B_j$ that $S_{j'} \cap C_j \neq \emptyset$,
hence, by the triangle inequality, all facilities in $S_{j'}$ are at distance at most $3 \cdot d_{j}^{(max)}$
from $j$. We thus infer the following

\begin{corollary} \label{cor_clustering}
The family of clusters $\SSS$ contains for each client $j \in \C'$ a collection of disjoint clusters
$A'_j \cup B_j$ containing only facilities within distance $3 \cdot d_{j}^{(max)}$, 
and $\sum_{S \in A'_j \cup B_j} \lfloor \sum_{i \in S} \overline{y}_i \rfloor \geq \overline{r}_j$.
\end{corollary}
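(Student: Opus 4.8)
The plan is to assemble the corollary from invariants of the clustering procedure — most of which are already isolated in the discussion above — supplying only the two ingredients not yet spelled out: pairwise disjointness of the final collection $A'_j\cup B_j$, and the $3\cdot d_j^{(max)}$ distance bound.

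\emph{Validity and termination.} First I would confirm that Step~1 never fails: when a client $j\in\C'$ is selected, the preceding case analysis (cases 1--3) shows that $\sum_{S\in A_j}(\overline{y}(S)-\underline{y}(S))-rr_j$ is nonnegative initially and never decreases, so a subfamily $X_j$ with $\sum_{S\in X_j}(\overline{y}(S)-\underline{y}(S))\ge rr_j$ exists. Second, processing $j$ forces $rr_j\le 0$: since $rr_j$ is a positive integer and $\overline{y}(S_j)=\sum_{S\in X_j}\overline{y}(S)\ge\sum_{S\in X_j}\underline{y}(S)+rr_j$ with an integer right-hand side, we get $\underline{y}(S_j)\ge\sum_{S\in X_j}\underline{y}(S)+rr_j$, so replacing $X_j$ by $S_j$ in $A_j$ raises $\sum_{S\in A_j\cup B_j}\underline{y}(S)$ by at least $rr_j$. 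Third, the very same three-case analysis, now read off for $\sum_{S\in A_j\cup B_j}\underline{y}(S)$, shows this quantity is nondecreasing over the whole run (in case~2, $\lfloor\cdot\rfloor$ is superadditive over the merged members of $A_j$; in case~3 this is precisely the displayed minimality inequality). Hence once $rr_j\le 0$ it stays so, each iteration strictly shrinks the set of active clients, the loop terminates, and at the end $\sum_{S\in A_j\cup B_j}\underline{y}(S)\ge\overline{r}_j$ for every $j\in\C'$.

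\emph{Disjointness and the floor sum.} The procedure maintains the invariant that the members of $A_j\cup B_j$ are pairwise disjoint, since inserting a new cluster first removes every set it meets. Because the scaling step deleted each facility with $\hat{y}_i=1$ we have $\overline{y}_i<1$, so every surviving singleton $\{i\}\in A_j$ contributes $\lfloor\overline{y}_i\rfloor=0$; discarding the singletons to form $A'_j=A_j\cap\SSS$ therefore leaves the sum unchanged, giving $\sum_{S\in A'_j\cup B_j}\lfloor\sum_{i\in S}\overline{y}_i\rfloor=\sum_{S\in A_j\cup B_j}\underline{y}(S)\ge\overline{r}_j$, and by construction every member of $A'_j\cup B_j$ lies in $\SSS$.

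\emph{Distance bound.} The invariant ``every $S\in A_j$ satisfies $S\subseteq C_j$'' places all facilities of $A'_j$ within $d_j^{(max)}$ of $j$. For $S_{j'}\in B_j$: the cluster was created while $j$ was still active, so the selection rule gives $d_{j'}^{(max)}\le d_j^{(max)}$; and $S_{j'}$ entered $B_j$ only when $X_{j'}$ met $A_j$, so some $S\in X_{j'}\cap A_j\subseteq C_j$ lies inside $S_{j'}$, whence $S_{j'}\cap C_j\neq\emptyset$. Picking $i\in S_{j'}\cap C_j$ and any $i'\in S_{j'}$, both close facilities of $j'$, the triangle inequality yields $c_{i'j}\le c_{i'j'}+c_{ij'}+c_{ij}\le 2d_{j'}^{(max)}+d_j^{(max)}\le 3d_j^{(max)}$, proving the corollary. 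The single nontrivial point is the monotonicity step — that no client's residual requirement ever increases, so Step~1 is always feasible and the run ends with every requirement satisfied — and that argument has in effect already been carried out in the case analysis above; everything else is bookkeeping about the laminar structure together with one triangle-inequality estimate.
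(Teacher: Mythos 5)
Your proposal is correct and follows essentially the same route as the paper: the feasibility/monotonicity case analysis for the clustering loop, the disjointness bookkeeping of $A_j\cup B_j$, and the triangle-inequality bound $2d_{j'}^{(max)}+d_j^{(max)}\le 3d_j^{(max)}$ for clusters in $B_j$. The extra details you supply (the integrality argument that processing $j$ drives $rr_j\le 0$, termination, and that leftover singletons contribute $\lfloor\overline{y}_i\rfloor=0$ so passing from $A_j$ to $A'_j$ does not change the floor sum) are correct elaborations of points the paper leaves implicit, not a different approach.
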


Note that our clustering is related to, 
but more complex then the one of Chudak and Shmoys~\cite{DBLP:journals/siamcomp/ChudakS03} for UFL
and of Swamy and Shmoys~\cite{DBLP:journals/talg/SwamyS08} for FTFL,
where clusters are pairwise disjoint and each contains facilities
whose fractional opening sums up to or slightly exceeds the value of 1.

\subsection{Opening of facilities by dependent rounding}
\label{sec_rounding}

Given the family of subsets $\SSS \in 2^\F$ computed by the clustering procedure
from Section~\ref{sec_clustering}, we may proceed with rounding the 
fractional opening vector $\overline{y}$ into an integral vector $y^R$.
We do it by applying the rounding technique of Section~\ref{sec:dep-round},
guided by the family $\SSS$, which is done as follows.

While there is more than one fractional entry, select a minimal subset of 
$S \in \SSS$ which contains more than one fractional entry and apply the rounding procedure to entries of $\overline{y}$ indexed by elements of $S$ until at most one entry in $S$ remains fractional.
Eventually, if there remains a fractional entry, round it independently and let $y^R$ be the resulting vector.

Observe that the above process is one of the possible implementations of dependent rounding
applied to $\overline{y}$. As a result, the random integral vector $y^R$ satisfies properties (P1),(P2), and (P3).
Additionally, property (P2') holds for each cluster $S \in \SSS$.
Hence, at least $\lfloor \sum_{i \in S} \overline{y}_i \rfloor$ entries in each $S \in \SSS$ are rounded to 1.
Therefore, by Corollary~\ref{cor_clustering}, we get
\begin{corollary} \label{cor_rounding}
For each client $j \in \C'$.
\[
|\{i \in \F | y^R_i = 1 \mbox{ and } c_{ij} \leq 3 \cdot d_{j}^{(max)} \}| \geq \overline{r}_j.
\]
\end{corollary}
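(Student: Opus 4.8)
The plan is to combine the almost-sure guarantee of property (P2$'$) with the structural conclusion of Corollary~\ref{cor_clustering}; once those two inputs are in hand, the corollary should fall out essentially immediately.

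First I would confirm that the rounding scheme of Section~\ref{sec_rounding} is a legitimate instance of dependent rounding that respects the laminar family $\SSS$ in the sense required for (P2$'$): at each step we pick an inclusion-minimal cluster still containing at least two fractional coordinates and round inside it until at most one fractional coordinate of that cluster survives, which is exactly the order-respecting rounding set up in Section~\ref{sec:dep-round}. Granting this, for every $S \in \SSS$ we have, with probability one,
\[
\big|\{i \in S : y^R_i = 1\}\big| \;=\; \Big\lfloor \sum_{i \in S} \overline{y}_i \Big\rfloor \;=\; \underline{y}(S).
\]

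Next I would fix a client $j \in \C'$ and invoke Corollary~\ref{cor_clustering}, which supplies a subfamily $A'_j \cup B_j \subseteq \SSS$ of pairwise-disjoint clusters, every facility of which lies within distance $3\cdot d^{(max)}_j$ of $j$, satisfying $\sum_{S \in A'_j \cup B_j} \underline{y}(S) \geq \overline{r}_j$. Since the clusters in $A'_j \cup B_j$ are pairwise disjoint, the sets $\{i \in S : y^R_i = 1\}$ for $S \in A'_j \cup B_j$ are pairwise disjoint subsets of $\{i \in \F : y^R_i = 1,\ c_{ij} \leq 3\cdot d^{(max)}_j\}$. Summing cardinalities and using the displayed identity yields
\[
\big|\{i \in \F : y^R_i = 1,\ c_{ij} \leq 3\cdot d^{(max)}_j\}\big| \;\geq\; \sum_{S \in A'_j \cup B_j} \underline{y}(S) \;\geq\; \overline{r}_j,
\]
and since this holds with probability one it holds deterministically, which is the claim.

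The main obstacle is not in this short deduction but in the two inputs it rests on. The one point that genuinely needs care is that $A'_j \cup B_j$ is truly a subfamily of $\SSS$ (so that (P2$'$) applies to each of its members) consisting of bona fide multi-facility clusters rather than the leftover singletons of $A_j$ — the singletons contribute $0$ to the required sum, so it is harmless to drop them — and that pairwise disjointness across $A'_j$ and $B_j$ is maintained throughout the clustering loop; both facts are established in the analysis leading up to Corollary~\ref{cor_clustering}. Once (P2$'$) and Corollary~\ref{cor_clustering} are used as black boxes, the remainder is bookkeeping.
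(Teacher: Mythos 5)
Your proposal is correct and follows essentially the same route as the paper: the laminar-respecting rounding gives, with probability one, at least $\lfloor \sum_{i \in S} \overline{y}_i \rfloor$ opened facilities in every cluster $S \in \SSS$, and summing over the pairwise-disjoint clusters $A'_j \cup B_j$ supplied by Corollary~\ref{cor_clustering}, all of whose facilities lie within $3 \cdot d_j^{(max)}$ of $j$, yields the claim. One small precision: in the final vector $y^R$ the number of ones in a cluster $S$ need not equal $\lfloor \sum_{i \in S} \overline{y}_i \rfloor$ exactly (the leftover fractional entry of $S$ may be rounded up later, inside a larger cluster), so the guarantee should be stated as an inequality $\geq \lfloor \sum_{i \in S} \overline{y}_i \rfloor$ --- which is the direction your argument actually uses.
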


Next, we combine the facilities opened by rounding $y^R$ with facilities opened already
when scaling which are recorded in $\tilde{y}$, i.e., we update $\tilde{y} \leftarrow \tilde{y} + y^R$.

Eventually, we connect each client $j \in \C$ to $r_j$ closest opened facilities and code it in $\tilde{x}$.

\section{Analysis}
We will now estimate the expected cost of the solution $(\tilde{x},\tilde{y})$.
The tricky part is to bound the connection cost, which we do as follows.
We argue that a certain fraction of the demand of client $j$ may
be satisfied from its close facilities, then some part of the remaining
demand can be satisfied from its distant facilities.
Eventually, the remaining (not too large in expectation) part of the demand
is satisfied via clusters.


\subsection{Average distances}
Let us consider weighted average distances from a client $j$ to sets of facilities fractionally serving it.
Let $d_j$ be the average connection cost in $\overline{x}_{ij}$ defined as
\[
d_j = \frac{\sum_{i \in \F} c_{ij} \cdot \overline{x}_{ij} }{\sum_{i \in \F} \overline{x}_{ij}}.
\]
Let $d^{(c)}_j$, $d^{(d)}_j$  be the average connection costs in $\overline{x}_{ij}^{(c)}$ and $\overline{x}_{ij}^{(d)}$ defined as
\[
d^{(c)}_j = \frac{\sum_{i \in \F} c_{ij} \cdot \overline{x}_{ij}^{(c)} }{\sum_{i \in \F} \overline{x}_{ij}^{(c)}} ,
\]
\[
d^{(d)}_j = \frac{\sum_{i \in \F} c_{ij} \cdot \overline{x}_{ij}^{(d)} }{\sum_{i \in \F} \overline{x}_{ij}^{(d)}} .
\]
Let $R_j$ be a parameter defined as
\[
 R_j = \frac{d_j - d^{(c)}_j}{d_j}
\]
if $d_j > 0$ and $R_j = 0$ otherwise.
Observe that $R_j$ takes value between $0$ and $1$.
$R_j = 0$ implies $d^{(c)}_j = d_j = d^{(d)}_j$, and
$R_j = 1$ occurs only when $d^{(c)}_j = 0$. The role played by $R_j$
is that it measures a certain parameter of the instance, big values
are good for one part of the analysis, small values are good for the other.

\begin{lemma} \label{average_lemma}
  $d^{(d)}_j \leq d_j (1+ \frac{R_j}{\gamma -1})$.
\end{lemma}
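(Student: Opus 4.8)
The plan is to express $d_j$ as a convex combination of $d_j^{(c)}$ and $d_j^{(d)}$, and then use the fact that every distant facility of $j$ is at least as far as every close facility (by construction of $C_j$ from the distance ordering). Let me set up the notation: write $X^{(c)} = \sum_{i\in\F}\overline{x}^{(c)}_{ij}$ and $X^{(d)} = \sum_{i\in\F}\overline{x}^{(d)}_{ij}$, so that $X^{(c)} = \overline{r}_j$ (by the definition of the split) and $X^{(c)} + X^{(d)} = \sum_i \overline{x}_{ij} \geq \gamma\,\overline{r}_j$ (by the scaling). Hence $X^{(d)} \geq (\gamma-1)\overline{r}_j = (\gamma-1)X^{(c)}$. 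Also, since $\overline{x}_{ij} = \overline{x}^{(c)}_{ij} + \overline{x}^{(d)}_{ij}$, we have $d_j \cdot (X^{(c)} + X^{(d)}) = d_j^{(c)} X^{(c)} + d_j^{(d)} X^{(d)}$, i.e.\ $d_j$ is the weighted average of the two with weights $X^{(c)}, X^{(d)}$.

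First I would solve this identity for $d_j^{(d)}$: from $d_j(X^{(c)}+X^{(d)}) = d_j^{(c)}X^{(c)} + d_j^{(d)}X^{(d)}$ we get
\[
d_j^{(d)} = d_j + \frac{(d_j - d_j^{(c)})X^{(c)}}{X^{(d)}} = d_j + \frac{d_j R_j X^{(c)}}{X^{(d)}},
\]
using $d_j - d_j^{(c)} = d_j R_j$ from the definition of $R_j$ (the case $d_j = 0$ forces $R_j = 0$ and all the average distances to be $0$, so the bound holds trivially; assume $d_j > 0$). Now I would bound the fraction $X^{(c)}/X^{(d)}$: since $X^{(d)} \geq (\gamma-1)X^{(c)}$, we have $X^{(c)}/X^{(d)} \leq 1/(\gamma-1)$. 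Substituting gives
\[
d_j^{(d)} \leq d_j + \frac{d_j R_j}{\gamma - 1} = d_j\Bigl(1 + \frac{R_j}{\gamma-1}\Bigr),
\]
which is the claim. One subtlety to check when substituting the bound on $X^{(c)}/X^{(d)}$: the coefficient $d_j R_j$ multiplying it is nonnegative, because $R_j \in [0,1]$ and $d_j > 0$; this is what makes the inequality go in the right direction. (If the single shared distant-and-close facility causes any ambiguity in the split, note the definitions of $\overline{x}^{(c)}$ and $\overline{x}^{(d)}$ partition $\overline{x}_{ij}$ exactly, so no double counting occurs.)

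The only mild obstacle is making sure the degenerate cases are handled cleanly — namely $d_j = 0$ (handled by the definition $R_j = 0$ and everything collapsing to zero) and $X^{(d)} = 0$, which cannot happen here because $X^{(d)} \geq (\gamma-1)\overline{r}_j > 0$ whenever $\overline{r}_j \geq 1$, and clients with $\overline{r}_j = 0$ play no role. Everything else is the elementary "convex combination plus an ordering inequality" argument, so I expect the write-up to be short.
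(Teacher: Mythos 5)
Your proposal is correct and matches the paper's own argument: both rest on the same averaging identity $d_j\bigl(X^{(c)}+X^{(d)}\bigr)=d_j^{(c)}X^{(c)}+d_j^{(d)}X^{(d)}$ together with $X^{(c)}=\overline{r}_j$ and $X^{(d)}\geq(\gamma-1)\overline{r}_j$, the paper merely writing the consequence directly as $(d_j^{(d)}-d_j)(\gamma-1)\leq(d_j-d_j^{(c)})=R_j d_j$ instead of first solving for $d_j^{(d)}$. Your extra care with the degenerate cases ($d_j=0$, the shared close/distant facility) is fine and does not change the substance.
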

\begin{proof}
Recall that $\sum_{i \in \F} \overline{x}_{ij}^{(c)} = \overline{r}_j$ and
$\sum_{i \in \F} \overline{x}_{ij}^{(d)} \geq (\gamma-1) \cdot \overline{r}_j$.
Therefore, we have $(d^{(d)}_j - d_j)\cdot(\gamma -1) \leq (d_j - d^{(c)}_j) \cdot 1 = R_j \cdot d_j$,
which can be rearranged to get $d^{(d)}_j \leq d_j (1+ \frac{R_j}{\gamma -1})$.
\end{proof}

\begin{figure}[t]
\begin{center}
\fboxsep7pt
\framebox[0.90\columnwidth]{
\begin{minipage}{0.80\columnwidth}
\begin{center}

\psfrag{g}{$\frac{\overline{r}_j}{\gamma}$}
\psfrag{r}{$\overline{r}_j$}
\psfrag{c1}{$d^{(d)}_j$} 
\psfrag{c2}{$\hspace{-2mm} d^{(max)}_j$} 
\psfrag{c3}{$d_j$} 
\psfrag{c4}{$\hspace{-5mm} d^{(c)}_j = d_j(1 - R_j)$} 

\includegraphics[width=9cm]{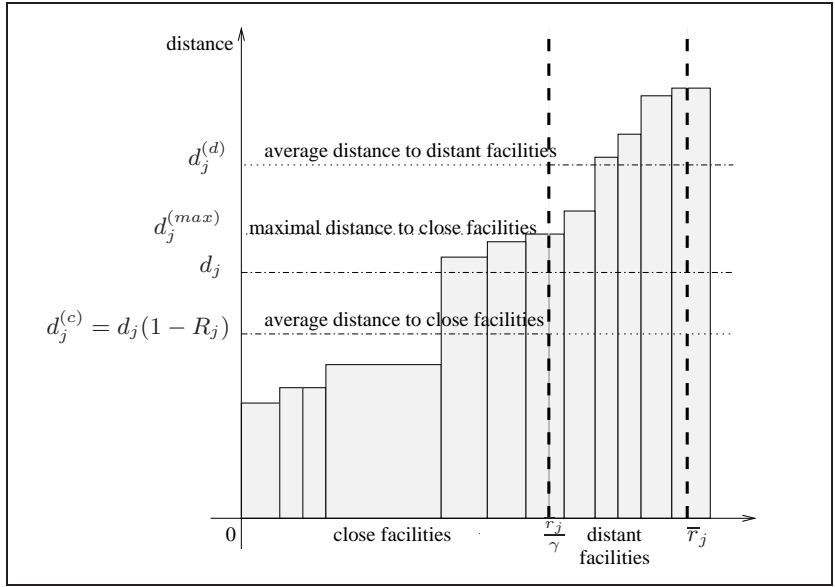}
\end{center}
\end{minipage}
}
\end{center}
\caption{Distances to facilities serving client $j$ in $\overline{x}$. The width of a rectangle corresponding to facility $i$
is equal to $\overline{x}_{ij}$. Figure helps to understand the meaning of $R_j$.}
\label{fig:distances}
\end{figure}

Finally, observe that the average distance from $j$ to the distant facilities of $j$ gives an upper bound
on the maximal distance to any of the close facilities of $j$. Namely, $d^{(max)}_j \leq d^{(d)}_j$.
\subsection{Amount of service from close and distant facilities}
We now argue that in the solution $(\tilde{x},\tilde{y})$,
a certain portion of the demand is expected to be served by the close
and distant facilities of each client.
Recall that for a client $j$ it is possible, that there is
a facility that is both its close and its distant facility.
Once we have a solution that opens such a facility,
we would like to say what fraction of the demand is served from the close
facilities. To make our analysis simpler we will toss a properly biased coin
to decide if using this facility counts as using a close facility.
With this trick we, in a sense, split such a facility into a close
and a distant part. Note that we may only do it for this part of
the analysis, but not for the actual rounding algorithm from Section~\ref{sec_rounding}.
Applying the above-described split of the undecided facility,
we get that the total fractional opening of close facilities of client $j$
is exactly $\overline{r}_j$, and the total fractional opening of both close
and distant facilities is at least $\gamma \cdot \overline{r}_j$.
Therefore, Corollary~\ref{cor:dr_cor} yields the following:

\begin{corollary} \label{cor_1}
  The amount of close facilities used by client $j$ in a solution
  described in Section~\ref{sec_rounding} is expected to be at least $(1-\frac{1}{e}) \cdot \overline{r}_j$.
\end{corollary}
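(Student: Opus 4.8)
The plan is to invoke Corollary~\ref{cor:dr_cor} with the set $S$ taken to be the set of close facilities of client $j$ (after the coin-toss split described just above the statement, so that the "close part" of an undecided facility is treated as a separate $0/1$ variable), and with $k = \overline{r}_j$. First I would recall that the vector $y^R$ produced by the rounding procedure of Section~\ref{sec_rounding} is a legitimate output of dependent rounding applied to $\overline{y}$, so $\hat{y} = y^R$ satisfies the hypotheses of Corollary~\ref{cor:dr_cor}. Next I would record the key quantitative fact established in the paragraph preceding the statement: the total fractional opening of the close facilities of $j$ is exactly $\overline{r}_j$, i.e.\ $\mbox{Sum}_S(\overline{y}) = \overline{r}_j$ (this uses $\sum_{i \in \F} \overline{x}^{(c)}_{ij} = \overline{r}_j$ together with $\overline{x}^{(c)}_{ij} \le \overline{y}_i$ for all close $i$, and the split that makes the close part of an undecided facility into its own entry so that its fractional value is exactly the corresponding $\overline{x}^{(c)}$).

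Then the argument is a one-line substitution. The number of close facilities of $j$ that are opened equals $\mbox{Sum}_S(y^R)$, and the number actually \emph{used} by $j$ is at least $\min\{\overline{r}_j,\ \mbox{Sum}_S(y^R)\}$, since $j$ connects to its $\overline{r}_j$ closest open facilities and never needs more than $\overline{r}_j$ of them. Taking expectations and applying Corollary~\ref{cor:dr_cor} with $k = \overline{r}_j$ gives
\[
\E[\text{\# close facilities of } j \text{ used}] \;\ge\; \E[\min\{\overline{r}_j, \mbox{Sum}_S(y^R)\}] \;\ge\; \overline{r}_j\cdot\bigl(1 - \exp(-\mbox{Sum}_S(\overline{y})/\overline{r}_j)\bigr) = \overline{r}_j\cdot\bigl(1 - \exp(-1)\bigr),
\]
which is exactly the claimed bound $(1 - \tfrac{1}{e})\cdot\overline{r}_j$.

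I expect the only genuinely delicate point to be the bookkeeping around the coin-toss split: one must check that after splitting the (at most one) facility that is both close and distant for $j$ into a "close" entry of value $\overline{x}^{(c)}_{ij}$ and a "distant" entry, the resulting collection of close entries still has total fractional mass exactly $\overline{r}_j$, and that applying Corollary~\ref{cor:dr_cor} to this (possibly enlarged) index set is legitimate — this is fine because the split only refines the independent/dependent rounding of a single coordinate and the theorem is stated for arbitrary $S \subseteq [N]$, but it is the step where a careless reader could go wrong. Everything else is routine: the inequality $\min\{\overline{r}_j,\cdot\}$ accounting for "used" versus "open", linearity of expectation, and plugging $\mbox{Sum}_S(\overline{y}) = \overline{r}_j = k$ into the exponential bound.
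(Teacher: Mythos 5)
Your proposal is correct and takes essentially the same route as the paper, which likewise derives this corollary by observing that after the coin-toss split the close facilities of $j$ carry fractional opening mass $\overline{r}_j$ and then invoking Corollary~\ref{cor:dr_cor} with $S$ the set of close facilities, $k=\overline{r}_j$, and $\hat{y}=y^R$. The only cosmetic difference is that your bookkeeping really only guarantees $\mbox{Sum}_S(\overline{y})\geq\overline{r}_j$ rather than exact equality, but since the bound $k\cdot(1-\exp(-\mbox{Sum}_S(\overline{y})/k))$ is monotone in $\mbox{Sum}_S(\overline{y})$ this only strengthens the conclusion.
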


\begin{corollary}\label{cor_2}
  The amount of close and distant facilities used by client $j$ in a solution
  described in Section~\ref{sec_rounding} is expected to be at least $(1-\frac{1}{e^\gamma}) \cdot \overline{r}_j$.
\end{corollary}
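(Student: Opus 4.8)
The plan is to mirror the reasoning that produced Corollary~\ref{cor_1}, but now applied to the union of close \emph{and} distant facilities of client $j$, and with the parameter $k$ in Corollary~\ref{cor:dr_cor} set to $\overline{r}_j$. First I would fix a client $j$ and recall the coin-tossing split described just above: for the (at most one) facility that is simultaneously close and distant to $j$, we flip a biased coin so that in expectation its fractional opening $\overline{y}_i$ is apportioned between the ``close'' side and the ``distant'' side; after this conceptual split, the set of close facilities $C_j$ has total fractional opening exactly $\overline{r}_j$, and hence the combined set $C_j \cup D_j$ has total fractional opening $\sum_{i \in C_j \cup D_j}\overline{y}_i \geq \sum_{i\in\F}\overline{x}_{ij} \geq \gamma\cdot\overline{r}_j$, using $\overline{x}_{ij}\le\overline{y}_i$ and the post-scaling inequality $\sum_i \overline{x}_{ij}\geq\gamma\cdot\overline{r}_j$.

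Next I would apply Corollary~\ref{cor:dr_cor} with $S = C_j \cup D_j$, with $\hat{y} = y^R$ (the dependent-rounding output of Section~\ref{sec_rounding}), and with $k = \overline{r}_j$. Writing $\mathrm{Sum}_S(y^R)$ for the number of facilities of $C_j\cup D_j$ opened by the rounding, the corollary gives
\[
\E\bigl[\min\{\overline{r}_j,\ \mathrm{Sum}_{C_j\cup D_j}(y^R)\}\bigr]\ \geq\ \overline{r}_j\cdot\bigl(1 - \exp(-\mathrm{Sum}_{C_j\cup D_j}(\overline{y})/\overline{r}_j)\bigr)\ \geq\ \overline{r}_j\cdot\bigl(1 - \exp(-\gamma)\bigr),
\]
where the last step substitutes the lower bound $\mathrm{Sum}_{C_j\cup D_j}(\overline{y})\geq\gamma\cdot\overline{r}_j$ into the decreasing function $t\mapsto 1-e^{-t/\overline{r}_j}$. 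Finally I would observe that the number of close-and-distant facilities actually ``used'' by client $j$ in the solution of Section~\ref{sec_rounding} is $\min\{\overline{r}_j,\ \mathrm{Sum}_{C_j\cup D_j}(y^R)\}$: client $j$ needs only $\overline{r}_j$ connections and is connected to its $r_j$ (equivalently, $\overline{r}_j$, after the scaling bookkeeping) closest open facilities, so it draws on at most $\overline{r}_j$ of them from any given set, and it draws on all of them if fewer than $\overline{r}_j$ are open. This yields the claimed bound $(1-e^{-\gamma})\cdot\overline{r}_j$.

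The only real subtlety—already flagged in the text—is the legitimacy of treating the single ``undecided'' facility by a coin toss purely for the purpose of this expectation computation; one must note that linearity of expectation lets us combine the coin toss with the randomness of the dependent rounding, and that this device is used only in the analysis, not in the algorithm. The rest is a direct substitution into Corollary~\ref{cor:dr_cor}, so I do not anticipate a genuine obstacle; the proof is essentially one line once the set $S$ and the value $k=\overline{r}_j$ are identified and the fractional-opening lower bound $\sum_{i\in C_j\cup D_j}\overline{y}_i\geq\gamma\cdot\overline{r}_j$ is in hand.
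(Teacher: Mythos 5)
Your proposal matches the paper's own (implicit) derivation: the paper obtains Corollary~\ref{cor_2} exactly by noting that, after the coin-toss split of the single facility that is both close and distant, the total fractional opening of $C_j \cup D_j$ is at least $\gamma\cdot\overline{r}_j$, and then invoking Corollary~\ref{cor:dr_cor} with $S = C_j\cup D_j$ and $k=\overline{r}_j$, which is precisely your argument. The identification of the ``amount of facilities used'' with $\min\{\overline{r}_j,\mathrm{Sum}_{C_j\cup D_j}(y^R)\}$ and the monotonicity step $1-\e^{-t/\overline{r}_j}$ are the same as in the paper, so the proof is correct and essentially identical in approach.
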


Motivated by the above bounds we design a selection method to choose a 
(large-enough in expectation)
subset of facilities opened around client $j$:

\begin{lemma} \label{lem:facility_selection}
For $j \in \C'$ we can select a subset $F_j$ of open facilities 
from $C_j \cup D_j$ such that:
 \begin{eqnarray*}
  |F_j| & \leq & \overline{r}_j \mbox{ (with probability 1)},\\
 E[F_j] & = & (1-\frac{1}{e^\gamma}) \cdot \overline{r}_j,\\
 E[\sum_{i \in F_j} c_{ij}] & \leq & ((1-1/e) \cdot \overline{r}_j) \cdot d_j^{(c)} +
(((1-\frac{1}{e^\gamma})-(1-1/e)) \cdot \overline{r}_j) \cdot  d_j^{(d)}.
 \end{eqnarray*}
\end{lemma}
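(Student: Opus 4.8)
The plan is to build $F_j$ by a two-stage random selection on top of the rounded solution $y^R$, treating the one possible facility that is both close and distant as split according to an independent biased coin, as described just before the statement.

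\medskip

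\textbf{Setup.} First I would fix a client $j \in \C'$ and consider, after the coin-splitting trick, the (random) sets $O^{(c)}_j = \{i \in C_j : y^R_i = 1\}$ and $O^{(d)}_j = \{i \in D_j : y^R_i = 1\}$ of open close, respectively distant, facilities of $j$; by the split these are disjoint. Let $N^{(c)} = \min\{\overline{r}_j, |O^{(c)}_j|\}$ and note that by Corollary~\ref{cor_1} we have $\E[N^{(c)}] \geq (1 - 1/e)\overline{r}_j$, and by Corollary~\ref{cor_2} the total number of open close-and-distant facilities, capped at $\overline{r}_j$, has expectation at least $(1 - 1/e^\gamma)\overline{r}_j$. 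The target size $(1 - 1/e^\gamma)\overline{r}_j$ lies between these, so there is room to take all of the (capped) close ones and top up with distant ones.

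\medskip

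\textbf{Selection rule.} The plan is: from the open close facilities take the $N^{(c)} = \min\{\overline{r}_j, |O^{(c)}_j|\}$ of them closest to $j$ (so we never exceed the requirement from close facilities alone); then, if $N^{(c)} < \overline{r}_j$, top up with the closest open distant facilities, but \emph{cap the total at $\overline{r}_j$ with probability one}, and moreover cap the total \emph{in expectation} at exactly $(1 - 1/e^\gamma)\overline{r}_j$ by, if necessary, discarding some of the selected distant facilities according to an auxiliary independent random choice that equalizes the expectation to the target. Concretely: let $M$ be the number of open close-and-distant facilities capped at $\overline{r}_j$; we have $N^{(c)} \le M \le \overline{r}_j$ always and $\E[M]\ge(1-1/e^\gamma)\overline r_j\ge\E[N^{(c)}]$. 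Choose $F_j$ to consist of the $N^{(c)}$ closest open close facilities together with a random number $T$ of the closest remaining open distant facilities, where $0 \le T \le M - N^{(c)}$ and $T$ is chosen (using fresh randomness, and allowing its conditional distribution to depend on the history) so that $\E[N^{(c)} + T] = (1 - 1/e^\gamma)\overline{r}_j$ exactly; this is feasible precisely because $\E[N^{(c)}] \le (1-1/e^\gamma)\overline r_j\le\E[M]$. The first two bullets of the lemma are then immediate: $|F_j| = N^{(c)} + T \le M \le \overline{r}_j$ with probability one, and $\E[|F_j|] = (1 - 1/e^\gamma)\overline{r}_j$ by construction.

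\medskip

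\textbf{Cost bound.} For the third bullet, every facility counted in $N^{(c)}$ is a close facility, hence contributes a distance that, averaged appropriately, is controlled by $d^{(c)}_j$ — here I would use that $d^{(c)}_j$ is the $\overline{x}^{(c)}$-weighted average distance to the close facilities and that we always pick the \emph{closest} open ones, so that $\E[\sum_{i \in O^{(c)}_j, \text{closest } N^{(c)}} c_{ij}]$ is maximized, in a suitable stochastic-domination sense, by the worst case in which opened close facilities are the farthest ones; a clean way to phrase this is that the expected cost of the $N^{(c)}$ selected close facilities is at most $\E[N^{(c)}] \cdot d^{(c)}_j \le (1-1/e)\overline r_j\cdot d^{(c)}_j$, using $d^{(max),c}_j$-type monotonicity together with the definition of $d^{(c)}_j$ as a weighted average. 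Similarly each selected distant facility lies within distance $d^{(d)}_j$ on average, contributing at most $\E[T]\cdot d^{(d)}_j = ((1-1/e^\gamma)-(1-1/e))\overline r_j\cdot d^{(d)}_j$. Summing the two gives exactly the claimed inequality. The main obstacle is making the ``pick the closest open ones, so the expected cost is at most (expected count)$\times$(weighted-average distance)'' step rigorous: one must argue that conditioning on how many facilities are open does not bias their positions adversely beyond what the weighted average already accounts for. I would handle this by ordering the facilities of $C_j$ by distance, writing $N^{(c)}$ as a sum of indicators over prefixes, and using the marginal property (P1) together with $\sum_{i\in C_j}\overline x^{(c)}_{ij}=\overline r_j$ to bound the expected total distance of the selected close facilities by $d^{(c)}_j\cdot\E[N^{(c)}]$; an analogous prefix argument on $D_j$ handles the distant part, with the extra subtlety that $T$ may truncate the selection, which only \emph{decreases} the cost and is therefore harmless.
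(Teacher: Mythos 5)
Your selection rule (take the closest open facilities of $C_j \cup D_j$ capped at $\overline{r}_j$, then use auxiliary randomness to pin the expected cardinality to exactly $(1-\frac{1}{e^\gamma})\overline{r}_j$) is essentially the paper's construction, but the step you yourself flag as the main obstacle is precisely where your argument lacks the necessary tool, and property (P1) cannot supply it. The probability that facility $i$ ends up in $F_j$ is not its marginal $\overline{y}_i$: it is the probability that $i$ is open \emph{and} fewer than $\overline{r}_j$ closer facilities are open, which depends on the joint distribution produced by dependent rounding. What the paper actually uses is Corollary~\ref{cor:dr_cor} applied to \emph{every prefix} of the distance-sorted facility list, after splitting all facilities into equal $\epsilon$-copies (so that $d_j^{(c)}$ and $d_j^{(d)}$ become uniform averages over the close and distant blocks): this gives cumulative selection-probability bounds $\sum_{i' \leq i} \Pr[i' \in F_j] \geq k\bigl(1-\exp(-\epsilon i/k)\bigr)$; concavity of $i \mapsto 1-\exp(-\epsilon i/k)$ then linearizes these prefix bounds, and a mass-shifting (majorization) argument against the flat reference profile, using monotonicity of $c_{ij}$ in $i$, yields exactly the claimed right-hand side. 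These prefix bounds rest on Theorems~\ref{dep-vs-indep_thm} and~\ref{dr_thm}, i.e., on sum-preservation and negative correlation, not on marginals; your proposed justification via (P1) together with $\sum_{i\in C_j}\overline{x}^{(c)}_{ij}=\overline{r}_j$ does not exclude the adverse conditioning you worry about, and your intermediate claim that the close part costs at most $\E[N^{(c)}]\cdot d_j^{(c)}$ is not what the paper proves and does not follow without a prefix/stochastic-domination argument of this very kind (the paper bounds the whole selection against a profile placing mass exactly $(1-1/e)\overline{r}_j$ on the close block, not mass $\E[N^{(c)}]$).

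There is also a smaller flaw in the equalization step: you justify the existence of the top-up variable $T \geq 0$ by ``$\E[N^{(c)}] \leq (1-\frac{1}{e^\gamma})\overline{r}_j \leq \E[M]$'', but only the second inequality is available (Corollary~\ref{cor_2}); Corollary~\ref{cor_1} is a \emph{lower} bound on $\E[N^{(c)}]$, and the expectation can overshoot the target (e.g., $\overline{r}_j=1$ with a single non-special close facility of opening $0.99$ gives $\E[N^{(c)}]\approx 0.99 > 1-e^{-\gamma}$). In that case adding or discarding only distant facilities cannot enforce $\E[|F_j|]=(1-\frac{1}{e^\gamma})\overline{r}_j$; you must also be allowed to drop the farthest selected \emph{close} facilities, which is exactly what the paper's $\alpha$-truncation of the tail of the distance-ordered selection does (and which, as you correctly note for the distant part, can only decrease the cost). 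This second issue is easily repaired; the prefix-domination machinery described above is the substantive missing piece.
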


A rather technical but not difficult proof of the above lemma is given 
in Appendix~\ref{app:exp-conn}.

\subsection{Calculation}
We may now combine the pieces into the algorithm ALG:
\begin{enumerate}
\item solve the LP-relaxation of (\ref{eq1})-(\ref{integrality});
 \item scale the fractional solution as described in Section~\ref{sec_scaling}; \label{step_scaling}
 \item create a family of clusters as described in Section~\ref{sec_clustering}; \label{step_clustering}
 \item round the fractional openings as described in Section~\ref{sec_rounding}; \label{step_rounding}
 \item connect each client $j$ to $r_j$ closest open facilities; \label{step_connection}
 \item output the solution as $(\tilde{x},\tilde{y})$.
\end{enumerate}

\begin{theorem}
 ALG is an $1.7245$-approximation algorithm for FTFL. 
\end{theorem}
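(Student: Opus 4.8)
The plan is to bound separately the expected facility-opening cost and the expected connection cost of the solution $(\tilde{x},\tilde{y})$, and show their sum is at most $\gamma \approx 1.7245$ times the LP optimum. The facility cost is easy: by property (P1) of dependent rounding, $\Pr[y^R_i=1]=\overline{y}_i \le \gamma y^*_i$, and the facilities opened during scaling have $\tilde y_i = 1$ exactly when $\gamma y^*_i \ge 1$, so in all cases $\E[\tilde y_i] \le \gamma y^*_i$; summing against $f_i$ gives facility cost at most $\gamma \sum_i f_i y^*_i$. All the work is in the connection cost.

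For the connection cost of a client $j$, we must bound $\E[\sum_{\text{chosen }i} c_{ij}]$ where in step~\ref{step_connection} client $j$ picks its $r_j$ nearest open facilities. The key point is that this greedy choice is at least as cheap as \emph{any} feasible assignment of $r_j$ open facilities to $j$, so it suffices to exhibit such an assignment and bound its expected cost. First, for special clients (those with $\overline r_j = 1$ and a special facility $i' \in C_j$, excluded from clustering) we handle the single residual demand directly via the special facility, which is at distance $d^{(max)}_j \le d^{(d)}_j$; together with the facilities already connected during scaling this is easily within budget. For a client $j \in \C'$, we assemble the $\overline r_j$ connections in three layers: (i) the subset $F_j \subseteq C_j \cup D_j$ from Lemma~\ref{lem:facility_selection}, of expected size $(1-e^{-\gamma})\overline r_j$ and controlled expected cost in terms of $d^{(c)}_j$ and $d^{(d)}_j$; (ii) the remaining $\overline r_j - |F_j|$ demand served through the cluster family, using Corollary~\ref{cor_rounding}, which guarantees at least $\overline r_j$ open facilities within distance $3 d^{(max)}_j$ of $j$ — so whatever part of the demand is not met by $F_j$ can be routed to these clustered facilities at cost at most $3 d^{(max)}_j$ each; and (iii) the $r_j - \overline r_j$ connections already paid for during scaling, whose incremental cost is absorbed by the gap between $\hat x$ and $\overline x$. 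Using $d^{(max)}_j \le d^{(d)}_j$, $d^{(c)}_j = (1-R_j)d_j$, and Lemma~\ref{average_lemma} bounding $d^{(d)}_j \le d_j(1 + R_j/(\gamma-1))$, the expected per-client connection cost becomes a function of $d_j$, $\overline r_j$, $\gamma$, and $R_j \in [0,1]$ only; since $\sum_i \overline x_{ij} = $ (the LP connectivity that contributes $d_j \cdot (\text{that sum})$ to the LP cost), it is enough that the coefficient multiplying $d_j \overline r_j$ be at most $\gamma$ after summing the close-contribution, the distant/clustered contribution, and the scaling contribution.

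Concretely, I would write the expected connection cost of $j\in\C'$ as roughly
\[
\overline r_j\Big[(1-\tfrac1e)\,d^{(c)}_j + \big((1-\tfrac1{e^\gamma})-(1-\tfrac1e)\big)\,d^{(d)}_j + \tfrac1{e^\gamma}\cdot 3\,d^{(max)}_j\Big],
\]
plus the scaling term, then substitute the bounds above to get an expression of the form $\overline r_j\, d_j \cdot h_\gamma(R_j)$, and verify $\max_{R_j\in[0,1]} h_\gamma(R_j) \le \gamma$ at $\gamma = 1.7245$. Because $h_\gamma$ is (after the substitutions) affine or convex in $R_j$, it suffices to check the two endpoints $R_j = 0$ and $R_j = 1$; the value $\gamma \approx 1.7245$ is precisely the root making these two endpoint constraints tight, which is where the constant comes from.

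The main obstacle is step~(ii): one must argue that the greedy connection really can use the clustered facilities \emph{in addition to} (and disjointly from) those counted in $F_j$, without double-counting a facility or reconnecting $j$ to a facility it already reached during scaling. This is exactly what the special-facility bookkeeping in Section~\ref{sec_scaling} and the disjointness of $A'_j \cup B_j$ in Corollary~\ref{cor_clustering} are designed to ensure: special facilities are kept out of all clusters, so the clustered facilities within distance $3d^{(max)}_j$ are genuinely new connections for $j$. Making this accounting airtight — in particular that $|F_j| \le \overline r_j$ almost surely so the ``remaining demand'' is well-defined and nonnegative, and that the clusters supply enough fresh facilities for it — is the delicate part; once it is in place the remaining optimization over $R_j$ and $\gamma$ is a routine one-variable calculation.
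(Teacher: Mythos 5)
Your treatment of the non-special clients $j \in \C'$ is essentially the paper's own proof: the same three-layer decomposition (the set $F_j$ from Lemma~\ref{lem:facility_selection}, the residual $\overline{r}_j - |F_j|$ demand routed to clustered open facilities within $3d^{(max)}_j$ via Corollary~\ref{cor_rounding}, and the $r_j-\overline{r}_j$ connections paid from the gap between $\hat{x}$ and $\overline{x}$), the same substitution $d^{(c)}_j=(1-R_j)d_j$, $d^{(d)}_j\leq d_j(1+\frac{R_j}{\gamma-1})$, and the same defining relation for $\gamma$. (A cosmetic inaccuracy: the coefficient of $R_j$ is positive for $1<\gamma<2$, so only the endpoint $R_j=1$ is tight; $\gamma_0$ is the root of $\gamma_0=(1/e+\frac{2}{e^{\gamma_0}})(1+\frac{1}{\gamma_0-1})$, and the $R_j=0$ constraint is slack.) Your facility-cost accounting via (P1) also matches the paper.

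The genuine gap is your handling of the special clients $j\in\C\setminus\C'$. You connect the one residual unit of demand directly to the special facility and call this ``easily within budget,'' but the per-client budget is $\sum_i c_{ij}\overline{x}_{ij}$, which can equal $\gamma\overline{r}_j d_j=\gamma d_j$ exactly, while $d^{(max)}_j$ can strictly exceed $\gamma d_j$: take almost all of the close mass at distance $0$ and a sliver of close mass at distance $D$, so that $d^{(max)}_j=D$ while $d_j\approx\frac{\gamma-1}{\gamma}D$, i.e.\ $\gamma d_j\approx(\gamma-1)D\approx 0.72\,D<D$. This is precisely the trap the paper flags (``we cannot blindly connect $j$ to this facility, since $d^{(max)}_j$ may potentially be bigger than $\gamma\cdot d_j$''). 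The fix, which your argument is missing, is to use Corollary~\ref{cor_1}: with probability at least $1-1/e$ some close facility of $j$ is opened by the rounding, at expected distance at most $d^{(c)}_j$, and only in the complementary event does $j$ fall back on the special facility at distance $d^{(max)}_j\leq d^{(d)}_j$; this gives expected cost at most $(1-1/e)d^{(c)}_j+(1/e)d^{(d)}_j\leq d_j\bigl(1+\frac{1}{e(\gamma-1)}\bigr)\leq\gamma d_j$. Without this randomized fallback argument the special-client case breaks, and since these clients are excluded from the clustering, no other part of your proof covers them.
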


\begin{proof}
First observe that the solution produced by ALG is trivially
feasible to the original problem (\ref{eq1})-(\ref{integrality}),
as we simply choose different $r_j$ facilities for client $j$ in step~\ref{step_connection}.
What is less trivial is that all the $r_j$ facilities used by $j$ are within
a certain small distance. Let us now bound the expected connection cost of
the obtained solution.

For each client $j \in \C$ we get $r_j - \overline{r}_j$
facilities opened in Step~\ref{step_scaling}. As we already argued
in Section~\ref{sec_scaling}, we may afford to connect $j$ to these
facilities and pay the connection cost from the difference between
$\sum_i c_{ij}\hat{x}_{ij}$ and $\sum_i c_{ij}\overline{x}_{ij}$.
We will now argue, that client $j$ may connect to the
remaining $\overline{r}_j$ with the expected connection cost bounded by
$\sum_i c_{ij}\overline{x}_{ij}$.

For a special client $j \in (\C \setminus \C')$ we have $\overline{r}_j = 1$ and already in Step~\ref{step_scaling} one special facility at distance $d^{(max)}_j$ from $j$ is opened. We cannot blindly connect $j$ to this facility,
since $d^{(max)}_j$ may potentially be bigger then $\gamma \cdot d_j$. What we do instead
is that we first look at close facilities of $j$ that, as a result of the rounding in Step~\ref{step_rounding},
with a certain probability, give one open facility at a small distance. By Corollary~\ref{cor_1}
this probability is at least $1-1/e$. 
It is easy to observe that the expected connection cost
to this open facility is at most $d^{(c)}_j$. Only if no close facility is open, we use the special facility,
which results in the expected connection cost of client $j$ being at most
\[
(1-1/e)d^{(c)}_j + (1/e)d^{(d)}_j \leq (1-1/e)d^{(c)}_j + (1/e)d_j(1+\frac{R_j}{\gamma-1}) \leq d_j(1+1/(e\cdot(\gamma-1)) \leq  \gamma \cdot d_j,
\]
where the first inequality is a consequence of Lemma~\ref{average_lemma}, and the last one is a consequence of the choice of $\gamma \approx 1.7245$.

In the remaining, we only look at non-special clients $j \in \C'$.
By Lemma~\ref{lem:facility_selection}, client $j$ may select to connect itself to the subset of open facilities
$F_j$, and pay for this connection at most $ ((1-1/e) \cdot \overline{r}_j) \cdot d_j^{(c)} +
(((1-\frac{1}{e^\gamma})-(1-1/e)) \cdot \overline{r}_j) \cdot  d_j^{(d)}$ in expectation.
The expected number of facilities needed on top of those from $F_j$ is $\overline{r}_j - E[|F_j|] = (\frac{1}{e^\gamma} \cdot \overline{r}_j)$. These remaining facilities client $j$ gets deterministically within the distance of at most $3 \cdot d^{(max)}_j$, which is possible by the properties of the rounding procedure described in Section~\ref{sec_rounding},
see Corollary~\ref{cor_rounding}. Therefore, the expected connection cost to facilities not in $F_j$ is at most 
$(\frac{1}{e^\gamma} \cdot \overline{r}_j) \cdot (3 \cdot d^{(max)}_j)$.

Concluding, the total expected connection cost of $j$ may be bounded by
\begin{eqnarray*}
&&((1-1/e) \cdot \overline{r}_j) \cdot d_j^{(c)} +
(((1-\frac{1}{e^\gamma})-(1-1/e)) \cdot \overline{r}_j) \cdot  d_j^{(d)} +
(\frac{1}{e^\gamma} \cdot \overline{r}_j) \cdot (3 \cdot d^{(max)}_j) \\
&\leq&  \overline{r}_j \cdot\left((1-1/e) \cdot d_j^{(c)} +
				((1-\frac{1}{e^\gamma})-(1-1/e)) \cdot d_j^{(d)} +
				\frac{1}{e^\gamma} \cdot (3 d^{(d)}_j) \right)\\
&=&     \overline{r}_j \cdot\left((1-1/e) \cdot d_j^{(c)}
			 	+ ((1+\frac{2}{e^\gamma}) - (1-1/e) ) \cdot d_j^{(d)} \right)\\
&\leq&  \overline{r}_j \cdot\left((1-1/e) \cdot (1-R_j) \cdot d_j
				+ ((1+\frac{2}{e^\gamma}) - (1-1/e) ) \cdot  (1+ \frac{R_j}{\gamma -1}) \cdot d_j\right)\\
&=& \overline{r}_j \cdot d_j \cdot\left( (1-1/e) \cdot (1-R_j)
				+ (\frac{2}{e^\gamma} + 1/e) \cdot  (1+ \frac{R_j}{\gamma -1})\right)\\
&=& \overline{r}_j \cdot d_j \cdot\left( (1-1/e) + (\frac{2}{e^\gamma} + 1/e)
				+ R_j \cdot ((\frac{2}{e^\gamma} + 1/e) \cdot \frac{1}{\gamma -1} - (1-1/e)) \right)\\
&=& \overline{r}_j \cdot d_j \cdot\left( 1+ \frac{2}{e^\gamma} + R_j \cdot \left(\frac{(\frac{2}{e^\gamma} + 1/e)}{\gamma -1} - (1-1/e)\right) \right),
\end{eqnarray*}
where the second inequality follows from Lemma~\ref{average_lemma} and the definition of $R_j$.

Observe that for $1 < \gamma < 2$, we have $ \frac{(\frac{2}{e^\gamma} + 1/e)}{\gamma -1} - (1-1/e) > 0$.
Recall that by definition, $R_j \leq 1$; so, $R_j = 1$ is the
worst case for our estimate, and therefore
\[
 \overline{r}_j \cdot d_j \cdot\left( 1+ \frac{2}{e^\gamma} + R_j \cdot \left(\frac{(\frac{2}{e^\gamma} + 1/e)}{\gamma -1} - (1-1/e)\right) \right) \leq
\overline{r}_j \cdot d_j \cdot (1/e + \frac{2}{e^\gamma})(1+\frac{1}{\gamma-1}).
\]
Recall that $\overline{x}$ incurs, for each client $j$, a fractional connection cost $\sum_{i \in \F} c_{ij}\overline{x}_{ij} \geq \gamma \cdot \overline{r}_j \cdot d_j$.
We fix $\gamma = \gamma_0$, such that $\gamma_0 = (1/e + \frac{2}{e^{\gamma_0}})(1+\frac{1}{\gamma_0-1}) \leq 1.7245$.

To conclude, the expected connection cost of $j$ to facilities opened during the rounding procedure
is at most the fractional connection cost of $\overline{x}$. The total connection cost is, therefore,
at most the connection cost of $\hat{x}$, which is at most $\gamma$ times the connection cost of $x^*$.

By property (P1) of dependent rounding, every single facility $i$ is
opened with the probability $\hat{y}_i$, which is at most
$\gamma$ times $y^*_i$.
Therefore, the total expected cost of the solution produced by ALG is at most $\gamma \approx 1.7245$ times
the cost of the fractional optimal solution $(x^*,y^*)$.
\end{proof}

\noindent \textbf{Concluding remarks.}
We have presented improved approximation algorithms for
the metric Fault-Tolerant Uncapacitated Facility Location problem.
The main technical innovation is the usage and analysis of
dependent rounding in this context. We believe that variants of
dependent rounding will also be fruitful in other location problems.
Finally, we conjecture that the approximation threshold for
both UFL and FTFL is the value $1.46\cdots$ suggested by
\cite{DBLP:journals/jal/GuhaK99}; it would be very interesting to
prove or refute this.

\bibliographystyle{abbrv}
\bibliography{ftufl}

\begin{thebibliography}{10}

\bibitem{ageev-sviri}
A.~Ageev and M.~Sviridenko.
\newblock Pipage rounding: a new method of constructing algorithms with proven
  performance guarantee.
\newblock {\em Journal of Combinatorial Optimization}, 8(3):307--328, 2004.

\bibitem{DBLP:conf/approx/Byrka07}
J.~Byrka.
\newblock An optimal bifactor approximation algorithm for the metric
  uncapacitated facility location problem.
\newblock In {\em APPROX-RANDOM}, pages 29--43, 2007.

\bibitem{DBLP:journals/siamcomp/ChudakS03}
F.~A. Chudak and D.~B. Shmoys.
\newblock Improved approximation algorithms for the uncapacitated facility
  location problem.
\newblock {\em {SIAM J.~Comput.}}, 33(1):1--25, 2003.

\bibitem{DBLP:journals/jal/GuhaK99}
S.~Guha and S.~Khuller.
\newblock Greedy strikes back: Improved facility location algorithms.
\newblock {\em J. Algorithms}, 31(1):228--248, 1999.

\bibitem{DBLP:journals/jal/GuhaMM03}
S.~Guha, A.~Meyerson, and K.~Munagala.
\newblock A constant factor approximation algorithm for the fault-tolerant
  facility location problem.
\newblock {\em J. Algorithms}, 48(2):429--440, 2003.

\bibitem{DBLP:journals/algorithmica/JainV03}
K.~Jain and V.~V. Vazirani.
\newblock An approximation algorithm for the fault tolerant metric facility
  location problem.
\newblock {\em Algorithmica}, 38(3):433--439, 2003.

\bibitem{DBLP:conf/stoc/LinV92}
J.-H. Lin and J.~S. Vitter.
\newblock epsilon-approximations with minimum packing constraint violation
  (extended abstract).
\newblock In {\em STOC}, pages 771--782, 1992.

\bibitem{DBLP:conf/stoc/ShmoysTA97}
D.~B. Shmoys, {\'E}.~Tardos, and K.~Aardal.
\newblock Approximation algorithms for facility location problems (extended
  abstract).
\newblock In {\em STOC}, pages 265--274, 1997.

\bibitem{DBLP:conf/focs/Srinivasan01}
A.~Srinivasan.
\newblock Distributions on level-sets with applications to approximation
  algorithms.
\newblock In {\em FOCS}, pages 588--597, 2001.

\bibitem{DBLP:journals/talg/SwamyS08}
C.~Swamy and D.~B. Shmoys.
\newblock Fault-tolerant facility location.
\newblock {\em {ACM Transactions on Algorithms}}, 4(4), 2008.

\end{thebibliography}

\appendix
\begin{center}
\Large{\textbf{Appendix}}
\end{center}

\section{The rounding approach of \cite{DBLP:conf/focs/Srinivasan01}}
\label{app:dep-round}
The dependent-rounding approach of \cite{DBLP:conf/focs/Srinivasan01}
to round a
given $y = (y_1, y_2, \ldots, y_N) \in [0,1]^N$, is as follows. Suppose
the current version of the rounded vector is
$v = (v_1, v_2, \ldots, v_N) \in [0,1]^N$; $v$ is initially $y$.
When we describe the random choice made in a step below, this choice
is made independent of all such choices made thus far.
If all the $v_i$ lie in $\{0,1\}$, we are done, so let us assume that
there is at least one $v_i \in (0,1)$. The first (simple) case
is that there is exactly
one $v_i$ that lies in $(0,1)$; we round $v_i$ in the natural way --
to $1$ with probability $v_i$, and to $0$ with complementary probability
of $1 - v_i$; letting $V_i$ denote the rounded version of $v_i$, we note
that
\begin{equation}
\label{eqn:P1-typeI}
\E[V_i] = v_i.
\end{equation}
This simple step is called a \emph{Type I iteration}, and it
completes the rounding process. The remaining case is that of a
\emph{Type II iteration}: there are at least two components of $v$ that
lie in $(0,1)$. In this case we choose two such components
$v_i$ and $v_j$ with $i \not= j$, arbitrarily.
Let $\epsilon$ and $\delta$ be the positive constants such that:
(i) $v_i + \epsilon$ and $v_j - \epsilon$ lie in $[0,1]$, with at least
one of these two quantities lying in $\{0,1\}$, and
(ii) $v_i - \delta$ and $v_j + \delta$ lie in $[0,1]$, with at least
one of these two quantities lying in $\{0,1\}$. It is easily seen that
such strictly-positive $\epsilon$ and $\delta$ exist and can be easily
computed. We then update $(v_i, v_j)$ to a random pair $(V_i, V_j)$
as follows:
\begin{itemize}
\item with probability $\delta/(\epsilon + \delta)$, set
$(V_i, V_j) := (v_i + \epsilon, ~v_j - \epsilon)$;
\item with the complementary probability of $\epsilon/(\epsilon + \delta)$, set
$(V_i, V_j) := (v_i - \delta, ~v_j + \delta)$.
\end{itemize}

The main properties of the above that we will need are:
\begin{eqnarray}
\Pr[V_i + V_j = v_i + v_j] & = & 1; \label{eqn:sum-preserve} \\
\E[V_i] = v_i & ~\mbox{and}~ & \E[V_j] = v_j; \label{eqn:exp-preserve} \\
\E[V_i V_j] & \leq & v_i v_j. \label{eqn:neg-correl}
\end{eqnarray}

We iterate the above iteration until all we get a rounded vector. Since each
iteration rounds at least one additional variable, we need at most $N$
iterations.

Note that the above description does not specify the order in which
the elements are rounded. Observe that we may use a predefined laminar family $\SSS$ 
of subsets to guide the rounding procedure. That is, we may first apply
Type II iterations to elements of the smallest subsets, then continue
applying Type II iterations for smallest subsets among those still containing more than
one fractional entry, and eventually round the at most one remaining fractional entry
with a Type I iteration. One may easily verify that executing the dependent rounding
procedure in this manner we almost preserve the sum of entries within each of the subsets
from our laminar family.

\section{Proofs of the statements in Section~\ref{sec:dep-round}}
\label{app:proofs}

\begin{proof}
\textbf{(For Theorem~\ref{dep-vs-indep:general})}
Recall that in the dependent-rounding approach, we begin with the vector
$v^{(0)} = (y_1, y_2, \ldots, y_N)$; in each iteration $t \geq 1$, we
start with a vector $v^{(t-1)}$ and probabilistically modify at most two
of its entries, to produce the vector $v^{(t)}$. We define a potential
function $\Phi(v^{(t)})$, which is a random variable that is fully
determined by $v^{(t)}$, i.e., determined by the random choices made
in iterations $1, 2, \ldots, t$:
\begin{equation}
\label{eqn:Phi-def}
\Phi(v^{(t)}) =
\sum_{\ell = 0}^s \lambda_{\ell}
\sum_{A \subseteq S:~ |A| = \ell} \left((\prod_{a \in A} v^{(t)}_a) \cdot
(\prod_{b \in (S - A)} (1 - v^{(t)}_b)) \right).
\end{equation}
Recall that dependent rounding terminates in some $m \leq N$ iterations.
A moment's reflection shows that:
\begin{equation}
\label{eqn:Phi-begin-end}
\Phi(v^{(0)}) = \E[g_{\lambda, S}(\mathcal{R}(y))]; ~
\E[\Phi(v^{(m)})] = \E[g_{\lambda, S}(\hat{y})].
\end{equation}
Our main inequality will be the following:
\begin{equation}
\label{eqn:Phi-grows}
\forall t \in [m], ~\E[\Phi(v^{(t)})] \geq \E[\Phi(v^{(t-1)})].
\end{equation}
This implies that
\[ \E[\Phi(v^{(m)})] \geq \E[\Phi(v^{(0)})] = \Phi(v^{(0)}), \]
which, in conjunction with (\ref{eqn:Phi-begin-end}) will complete
our proof.

Fix any $t \in [m]$, and fix any
choice for the vector $v^{(t-1)}$ that happens with positive probability.
Conditional on this choice, we will next prove that
\begin{equation}
\label{eqn:Phi-grows-one-iter}
\E[\Phi(v^{(t)})] \geq \Phi(v^{(t-1)});
\end{equation}
note that the expectation in the l.h.s.\ is only w.r.t.\ the random choice
made in iteration $t$, since $v^{(t-1)}$ is now fixed. Once we have
(\ref{eqn:Phi-grows-one-iter}), (\ref{eqn:Phi-grows}) follows from Bayes'
Theorem by a routine conditioning on the value of $v^{(t-1)}$.

Let us show (\ref{eqn:Phi-grows-one-iter}). We first dispose of two
simple cases. Suppose iteration $t$ is a Type I iteration, and that
$v^{(t-1)}_i$ is the only component of $v^{(t-1)}$ that lies in
$(0,1)$. Since $\Phi(v^{(t)})$ is a linear function of the random
variable $v^{(t)}_i$, (\ref{eqn:Phi-grows-one-iter}) holds with
equality, by (\ref{eqn:P1-typeI}). A similar argument holds if
iteration $t$ is a Type II iteration in which the
components $v^{(t-1)}_i$ and $v^{(t-1)}_j$ are probabilistically altered in
this iteration, if at most one of $i$ and $j$ lies in $S$.

So suppose iteration $t$ is a Type II iteration, and that both
$i$ and $j$ lie in $S$ (again, $v^{(t-1)}_i$ and $v^{(t-1)}_j$ are
the components altered in this iteration). Let $v_i = v^{(t-1)}_i$ and
$v_j = v^{(t-1)}_j$ for
notational simplicity, and let $V_i$ and $V_j$ denote their respective
altered values. Note that there are deterministic reals
$u_0, u_1, u_2, u_3$ which depend only on the components of
$v^{(t-1)}$ \emph{other than} $v^{(t-1)}_i$ and $v^{(t-1)}_j$, such that
\begin{eqnarray*}
\Phi(v^{(t-1)}) & = & u_0 + u_1 v_i + u_2 v_j + u_3 v_i v_j; \\
\Phi(v^{(t)}) & = & u_0 + u_1 V_i + u_2 V_j + u_3 V_i V_j.
\end{eqnarray*}
Therefore, in order to prove our desired bound
(\ref{eqn:Phi-grows-one-iter}), we have from
(\ref{eqn:exp-preserve}) and (\ref{eqn:neg-correl}) that is it
is
sufficient to show
\begin{equation}
\label{eqn:pair-coeff-neg}
u_3 \leq 0,
\end{equation}
which we proceed to do next.

Let us analyze (\ref{eqn:Phi-def}), the definition of
$\Phi$, to calculate $u_3$.
Let, for $0 \leq \ell \leq s$,
$\alpha_{\ell}$ denote the contribution of the term
\begin{equation}
\label{eqn:Phi-generic-term}
\lambda_{\ell} \cdot
\sum_{A \subseteq S:~ |A| = \ell} \left((\prod_{a \in A} v^{(t)}_a) \cdot
(\prod_{b \in (S - A)} (1 - v^{(t)}_b)) \right)
\end{equation}
to $u_3$; note that
\[ u_3 = \sum_{\ell = 0}^s \alpha_{\ell}. \]
In order to compute the values $\alpha_{\ell}$, it is convenient to
define certain quantities $\beta_r$, which we do next.
Define $T = S - \{i,j\}$, and note
that $|T| = s - 2$. For $0 \leq r \leq s - 2$, define
\[ \beta_r = \sum_{B \subseteq T:~ |B| = r} \left((\prod_{p \in B} v^{(t)}_p)
\cdot
(\prod_{q \in (T - B)} (1 - v^{(t)}_q)) \right). \]
Now, as a warmup, note that $\alpha_0 = \beta_0$ and $\alpha_s = \beta_{s-2}$.
Let us next compute $\alpha_{\ell}$ for $1 \leq \ell \leq s-1$.
The sum (\ref{eqn:Phi-generic-term}) can contribute a
``$v^{(t)}_i \cdot v^{(t)}_j$'' term in three ways:
\begin{itemize}
\item by taking both $i$ and $j$ in the set $A$ in
(\ref{eqn:Phi-generic-term}) -- this is possible only if $\ell \geq 2$ --
with a coefficient of $\lambda_{\ell} \beta_{\ell - 2}$ for the
``$v^{(t)}_i \cdot v^{(t)}_j$'' term;
\item by taking both $i$ and $j$ in the set $S - A$ in
(\ref{eqn:Phi-generic-term}) -- this is possible only if $\ell \leq s-2$ --
with a coefficient of $\lambda_{\ell} \beta_{\ell}$ for the
``$v^{(t)}_i \cdot v^{(t)}_j$'' term; and
\item by taking \emph{exactly one} of $i$ and $j$ in the
set $A$ -- this is possible for any $\ell \in [s-1]$ --
with a coefficient of $-2 \lambda_{\ell} \beta_{\ell-1}$ for the
``$v^{(t)}_i \cdot v^{(t)}_j$'' term (with the factor of $2$ arising from
the choice of $i$ or $j$ to put in $A$).
\end{itemize}

Rearranging the above three items,
the contribution of $\beta_r$ to $u_3$, for $0 \leq r \leq s-2$,
is $\lambda_{r} - 2 \lambda_{r + 1} + \lambda_{r + 2}$. That is,
\[ u_3 = \sum_{r=0}^{s-2}
(\lambda_{r} - 2 \lambda_{r + 1} + \lambda_{r + 2}) \cdot \beta_r. \]
Thus, the hypothesis of the theorem and the fact that all the values
$\beta_r$ are non-negative, together show that $u_3 \leq 0$ as required
by (\ref{eqn:pair-coeff-neg}).
\end{proof}

\begin{proof}
\textbf{(For Theorem~\ref{dep-vs-indep_thm})}
Let $s = |S|$.
The theorem directly follows from property \textbf{(P1)} if either $s \leq 1$
or $k \geq s$, so we may assume that $s \geq 2$ and that $k \leq s - 1$.
Of course, we may also assume that $k \geq 1$.
Note that for any $x \in \{0,1\}^N$,
\begin{eqnarray*}
\min\{k, \mbox{Sum}_S(x)\} & = &
(\sum_{\ell \leq k} \ell \cdot \mathcal{I}(\mbox{Sum}_S(x) = \ell)) +
(\sum_{\ell > k} k \cdot \mathcal{I}(\mbox{Sum}_S(x) = \ell)) \\
& = & g_{\lambda,S}(x),
\end{eqnarray*}
where
\[ \lambda = (0, 1, 2, \ldots, k, k, k , \ldots, k). \]
It is easy to verify that for all
$0 \leq r \leq s-2$,
$\lambda_{r} - 2 \lambda_{r + 1} + \lambda_{r + 2} \leq 0$.
(Recall that $1 \leq k \leq s-1$. The sum in the l.h.s.\ is zero for all
$r \not = k-1$, and equals $-1$ for $r = k-1$.
Thus we have the theorem, from
Theorem~\ref{dep-vs-indep:general}.
\end{proof}

\begin{proof}
\textbf{(For Theorem~\ref{dr_thm})}
Let $z_i=\frac{y_i}{k}$.
We prove by induction on $|S|$ that  
\begin{equation}
\E[\min\{k,\rsum_S(\Rc(y))\}] \geq k\Bigl(1-\prod_{i\in S}(1-z_i)\Bigr). \label{pfineq1} 
\end{equation} 
This proves the theorem since the RHS above is at least
$k\bigl(1-\exp(-\sum_{i\in S}z_i)\bigr)=k\bigl(1-\exp(-\rsum_S(y)/k)\bigr)$ 
(since $t\geq 1-\exp(-t)$ for all real $t$).  

We now establish \eqref{pfineq1} by induction on $|S|$. The base case when $|S|=1$ is
trivial. For notational simplicity, suppose that $1\in S$. For $|S|\geq 2$, we have
\begin{eqnarray*}
\E[\min\{k,\rsum_S(\Rc(y))\}] & = &
y_1\Bigl(1+\E[\min\{k-1,\rsum_{S\sm\{1\}}(\Rc(y))\}]\Bigr)+
(1-y_1)\E[\min\{k,\rsum_{S\sm\{1\}}(\Rc(y))\}] \\
& \geq & y_1+\E[\min\{k,\rsum_{S\sm\{1\}}(\Rc(y))\}]\Bigl(y_1\cdot\frac{k-1}{k}+1-y_1\Bigr) \\ 
& = & y_1+\Bigl(1-\frac{y_1}{k}\Bigr)\E[\min\{k,\rsum_{S\sm\{1\}}(\Rc(y))\}] \\
& \geq & k\Bigl(z_1+(1-z_1)\bigl(1-\prod_{i\in S\sm\{1\}}(1-z_i)\bigr)\Bigr)
\ =\ k\Bigl(1-\prod_{i\in S}(1-z_i)\Bigr).
\end{eqnarray*}
\end{proof}

\section{Proof of a bound on the expected connection cost of a client}
\label{app:exp-conn}

\begin{proof}
\textbf{(For Lemma~\ref{lem:facility_selection})}
Given client $j$, fractional facility opening vector $\overline{y}$,
distances $c_{ij}$, requirement $\overline{r}_j$, and facility subsets $C_j$ and $D_j$, we will
describe how to randomly choose a subset of at most $k= \overline{r}_j$ open facilities from $C_j \cup D_j$
with the desired properties.

Within this proof we will assume that all the involved numbers are rational.
Recall that the opening of facilities is decided in a dependent rounding routine,
that in a single step couples two fractional entries to leave at most one of them fractional.

Observe that, for the purpose of this argument, we may split a single facility into many identical copies
with smaller fractional opening. One may think that the input facilities and their original openings
were obtained along the process of dependent rounding applied to the multiple 
``small'' copies that we prefer to consider here. 
Therefore, without loss of generality, we may assume that all the facilities have fractional opening equal $\epsilon$,
i.e., $\overline{y}_i = \epsilon$ for all $i \in C_j \cup D_j$. Moreover, we may assume that sets $C_j$ and $D_j$ are disjoint.

By renaming facilities we may obtain that $C_j = \{1,2, \ldots, |C_j| \}$, $D_j = \{|C_j|+1,\ldots,|C_j|+|D_j|\}$,
and $c_{ij} \leq c_{i'j}$ for all $1 \leq i < i' \leq |C_j|+|D_j|$.

Consider random set $S_0 \subseteq C_j \cup D_j$ created as follows. Let $\hat{y}$ be the outcome of
rounding the fractional opening vector $\overline{y}$ with the dependent rounding procedure,
and define $S_0 = \{i: \hat{y}_i=1, (\sum_{j<i} \hat{y})<k\}$. By Corollary~\ref{cor:dr_cor}, we have that
$\E[|S_0|] \geq k \cdot (1 - \exp(-\mbox{Sum}_{C_j \cup D_j}(\overline{y}) / k))$. Define random set $S_\alpha$ for $\alpha \in (0,|C_j|+|D_j|]$
as follows. For $i=1,2,\ldots \lfloor |C_j|+|D_j|-\alpha \rfloor$ we have $i \in S_\alpha$ if and only if $i \in S_0$.
For $i=\lceil |C_j|+|D_j|-\alpha \rceil$, in case $i \in S_0$ we toss a 
(suitably biased) coin and include
$i$ in $S_\alpha$ with probability $\alpha - \lfloor \alpha \rfloor$. For $i > \lceil |C_j|+|D_j|-\alpha \rceil$ we
deterministically have $i \notin S_\alpha$.   

Observe that $\E[|S_\alpha|]$ is a continuous monotone non-increasing function of $\alpha$,
therefore there exists $\alpha_0$ such that $\E[|S_{\alpha_0}|] = k \cdot (1 - \exp(-\mbox{Sum}_{C_j \cup D_j}(\overline{y}) / k))$.
We fix $F_j = S_{\alpha_0}$ and claim that it has the desired properties.
Clearly, by definition, we have $\E[|F_j|] = k \cdot (1 - \exp(-\mbox{Sum}_{C_j \cup D_j}(\overline{y}) / k)) = (1-\frac{1}{e^\gamma}) \cdot \overline{r}_j$.
We next show that the expected total connection cost between $j$ and facilities in $F_j$ is not too large. 

Let $p_i^\alpha = Pr[i \in S_\alpha]$ and $p'_i=p_i^{\alpha_0}=Pr[i \in F_j]$. 
Consider the cumulative probability defined as $cp_i^\alpha= \sum_{j \leq i} p_j^\alpha$.
Observe that application of Corollary~\ref{cor:dr_cor} to subsets of first $i$ elements of $C_j \cup D_j$
yields $cp_i^0 \geq k \cdot (1 - \exp(-\epsilon i / k))$ for $i=1,\ldots, |C_j|+|D_j|$.
Since $(1 - \exp(-\epsilon i / k))$ is a monotone increasing function of $i$
one easily gets that also $cp_i^{\alpha} \geq k \cdot (1 - \exp(-\epsilon i / k))$ for $\alpha \leq \alpha_0$ and $i=1,\ldots, |C_j|+|D_j|$.
In particular, we get $cp_{|C_j|}^{\alpha_0} \geq k \cdot (1 - \exp(-\epsilon |C_j| / k))$.

Since $(1 - \exp(-\epsilon i / k))$ is a concave function of $i$, we also have 
\begin{eqnarray*}
cp_i^{\alpha_0} &\geq& k \cdot (1 - \exp(-\epsilon i / k))\\
								&\geq& (i/|C_j|) \cdot k \cdot (1 - \exp(-\epsilon |C_j| / k))\\
								& = & (i/|C_j|) \cdot (1-\frac{1}{e}) \cdot \overline{r}_j
\end{eqnarray*}
for all $1 \leq i \leq |C_j|$. Analogously, we get 
\begin{eqnarray*}
cp_i^{\alpha_0} &\geq& (k \cdot (1 - \exp(-\epsilon |C_j| / k)))\\
								&& \; +((i-|C_j|)/|D_j|) \cdot k \cdot \left((1 - \exp(\frac{-\epsilon (|C_j|+|D_j|)}{k}))-(1 - \exp(-\epsilon |C_j| / k))\right)\\
								& = &  \overline{r}_j \cdot (1-\frac{1}{e})  + \overline{r}_j \cdot \left( ((i-|C_j|)/|D_j|)((1-\frac{1}{e^{\gamma}})-(1-\frac{1}{e})) \right)
\end{eqnarray*}
for all $|C_j| < i \leq |C_j| +|D_j|$.

Recall that we want to bound $\E[\sum_{i\in F_j} c_{ij}] = \sum_{i \in C_j \cup D_j}p'_i c_{ij}$.
 From the above bounds on the cumulative probability, we get
that, by shifting the probability from earlier facilities to later ones, one 
may obtain a probability vector $p''$ with $p''_i = 1/|C_j| \cdot ((1-\frac{1}{e}) \cdot \overline{r}_j)$ for all $1 \leq i \leq |C_j|$, and $p''_i = 1/|D_j| \cdot ((1-\frac{1}{e^{\gamma}})-(1-\frac{1}{e})) \cdot \overline{r}_j$ for all $|C_j| < i \leq |C_j|+|D_j|$.
Since connection costs $c_{ij}$ are monotone non-decreasing in $i$, when shifting the probability one never decreases
the weighted sum, therefore
\begin{eqnarray*}
 \E[\sum_{i\in F_j} c_{ij}] &=& \sum_{i \in F_j}p'_i c_{ij} \\
 											&\leq& \sum_{i \in F_j}p''_i c_{ij} \\
 											&=& \sum_{1 \leq i \leq |C_j|} 1/|C_j| \cdot ((1-\frac{1}{e}) \cdot \overline{r}_j) c_{ij}\\
 											&& \; + \sum_{|C_j| < i \leq |C_j|+|D_j|} 
 											   1/|D_j| \cdot (((1-\frac{1}{e^{\gamma}})-(1-\frac{1}{e})) \cdot \overline{r}_j) c_{ij}\\
 											&=& ((1-1/e) \cdot \overline{r}_j) \cdot d_j^{(c)} +
(((1-\frac{1}{e^\gamma})-(1-1/e)) \cdot \overline{r}_j) \cdot  d_j^{(d)}.
\end{eqnarray*}

\end{proof}

\end{document}